\newtheorem{theorem}{Theorem}
\newtheorem{lemma}{Lemma}
\newtheorem{corollary}{Corollary}
\newtheorem{definition}{Definition}
\newtheorem{notation}{Notation}
\newtheorem{remark}{Remark}
\newtheorem{annotation}{Annotation}
\newcommand{\nn}{\nonumber}
\def\mI{{\bm{I}}}
\def\BibTeX{{\rm B\kern-.05em{\sc i\kern-.025em b}\kern-.08em
    T\kern-.1667em\lower.7ex\hbox{E}\kern-.125emX}}
\begin{document}
\title{
Information-theoretic Analysis of the Gibbs Algorithm: An Individual Sample Approach
}

\author{\IEEEauthorblockN{Youheng Zhu}
\IEEEauthorblockA{\textit{School of Computer Science and Technology} \\
\textit{Huazhong University of Science and Technology}\\
Wuhan, China \\
email: youhengzhu@hust.edu.cn}
\and
\IEEEauthorblockN{Yuheng Bu}
\IEEEauthorblockA{\textit{ECE department} \\
\textit{University of Florida}\\
Gainesville, USA \\
email: buyuheng@ufl.edu}
}

\maketitle

\begin{abstract}
Recent progress has shown that the generalization error of the Gibbs algorithm can be exactly characterized using the symmetrized KL information between the learned hypothesis and the entire training dataset. However, evaluating such a characterization is cumbersome, as it involves a high-dimensional information measure. In this paper, we address this issue by considering individual sample information measures within the Gibbs algorithm. Our main contribution lies in establishing the asymptotic equivalence between the sum of symmetrized KL information between the output hypothesis and individual samples and that between the hypothesis and the entire dataset.
We prove this by providing explicit expressions for the gap between these measures in the non-asymptotic regime. 
Additionally, we characterize the asymptotic behavior of various information measures in the context of the Gibbs algorithm, leading to tighter generalization error bounds. 
An illustrative example is provided to verify our theoretical results, demonstrating our analysis holds in broader settings.

\end{abstract}


\section{Introduction}\label{:Sec1}



One of the most important research topics in statistical learning theory is to capture the generalization behavior of the learning algorithms to avoid overfitting. Recently, \cite{russo2019much,xu2017information} proposed an information-theoretic approach to bound generalization error, where a learning algorithm is modeled as a randomized channel that takes the training dataset as input and outputs the learned hypothesis.
In this setting, different information measures can be used to derive various non-trivial generalization error bounds, which capture all components in supervised learning, including the data-generating distribution, hypothesis class, and the learning algorithm itself. In comparison, traditional approaches such as VC-dimension \cite{788640}, algorithmic
stability \cite{bousquet2002stability}, algorithmic robustness \cite{xu2012robustness}, and PAC-Bayesian bounds \cite{mcallester2003pac} cannot exploit all the aspects that affect the generalization performance.

After the seminal work~\cite{xu2017information}, several  approaches~\cite{asadi2018chaining,hellstrom2020generalization,hafez2020conditioning,haghifam2020sharpened,steinke2020reasoning,haghifam2021towards,harutyunyan2021information,aminian2022tighter,zhou2023stochastic} have been proposed to refine information-theoretic generalization error bounds. Among them, a significant advancement is presented in~\cite{bu2020tightening}, where the individual sample mutual information bound is introduced. By focusing on information measures involving individual samples, this bound is not only tighter but also simplifies the empirical estimation process, for instance, by using neural estimators like MINE \cite{pmlr-v80-belghazi18a}. In contrast, the mutual information-based bound in \cite{xu2017information} depends on the mutual information between the hypothesis and the entire training dataset, making it nearly impossible to estimate with a large sample size $n$.


This paper explores a similar individual sample approach in the context of a specific learning algorithm, 
the Gibbs algorithm (formally defined in~\eqref{eq:Gibbs}). Such an algorithm can be interpreted as a randomized variant of the standard empirical risk minimization algorithm with mutual information as regularization, and it has other important connections to  SGLD~\cite{raginsky2017non} and PAC-Bayesian bound \cite{thiemann2017strongly}. More importantly, it has been shown in~\cite{aminian2021exact,aminian2023information,chen2024gibbs} that the generalization error of the Gibbs algorithm can be characterized exactly using the symmetrized KL information between the hypothesis and the entire dataset. Just like the mutual information-based bound, this exact characterization also suffers from the same drawback in practical evaluation due to its high dimensionality. 



To address such an issue, in this paper, we study the individual sample information measures within the Gibbs algorithm and their counterparts involving the entire dataset. Our main contribution is an equivalency between the sum of symmetrized KL information between the output hypothesis and individual samples and that between the hypothesis and the entire dataset in the asymptotic regime $n\to \infty$. We also present other interesting properties of information measures in the individual sample context for both non-asymptotic and asymptotic regimes. In particular:
\begin{itemize}
    \item In Section~\ref{:Sec3}, we provide an explicit expression of the gap between the sum of symmetrized KL information w.r.t individual samples and that w.r.t the entire dataset in the non-asymptotic regime.
    \item In Section~\ref{:Sec4}, we precisely characterize the asymptotic behavior of different information measures for the Gibbs algorithm in terms of both convergence rate and constant factor. We then present our main theorem and additional results derived using similar techniques, which leads to a tighter bound on the generalization error.
    \item In Section~\ref{:Sec5}, an illustrative mean estimation example is provided to verify all theoretical results and demonstrate that our findings can hold in a more general setting.
\end{itemize}

\section{Preliminaries}\label{:Sec2}
In this section, we first introduce some background about information measures and the Gibbs algorithm. 

\subsection{Relevant Information Measures}
Given two probability measures $P$ and $Q$ defined on the same probability space $(\Omega,\mathcal{F})$, 
the symmetrized Kullback-Leibler (KL) divergence is defined as
\begin{equation}
    D_{\mathrm{SKL}}(P\Vert Q) \triangleq D(P\Vert Q)+D(Q\Vert P),
\end{equation}
which symmetrizes the standard KL divergence $D(P\Vert Q)$. When $P\ll Q\ll P$ where $\ll$ denotes absolute continuity between measures, symmetrized KL divergence can be written as
\begin{equation}
    D_{\mathrm{SKL}}(P\Vert Q)=\mathbb{E}_{Q}\bigg[\frac{dP}{dQ}\log\frac{dP}{dQ}-\log\frac{dP}{dQ}\bigg].
\end{equation}
It is natural to see that symmetrized KL divergence also belongs to the $f$-divergence family~\cite{sason2016f}.


For two random variables $X$ and $Y$, their mutual information is the KL divergence between their joint distribution and the product of the marginal distributions, i.e., $I(X;Y) \triangleq D(P_{X,Y}\Vert P_X\otimes P_Y)$.
Similarly, we can define the symmetrized KL information as
\begin{equation}\label{equ: Iskl=I+L}
\begin{aligned}
    I_{\mathrm{SKL}}(X;Y) &\triangleq D_{\mathrm{SKL}}(P_{X,Y}\Vert P_X\otimes P_Y)\\
    &=I(X;Y)+L(X;Y),
\end{aligned}
\end{equation}
where $L(X;Y)\triangleq D( P_X\otimes P_Y \Vert P_{X,Y})$ represents lautum information~\cite{palomar2008lautum}.

\subsection{Generalization Error in Supervised Learning}
We denote ${\mathcal{W}}$ as the hypothesis class and ${\mathcal Z}$ as the instance space. A training dataset $S=\{Z_i\}_{i=1}^n \in \mathcal{S}$ with $Z_i\in { \mathcal Z}$ consists $n$ samples drawn i.i.d from the data-generating distribution $\mu$. A loss function $\ell:{ \mathcal W}\times{ \mathcal Z}\to \mathbb{R}_0^+$ is used to measure the performance of a hypothesis on a sample $Z$. Therefore, we define the empirical and population risks associated with a given hypothesis $w$ by
\begin{align}
    L_e(w,s) &\triangleq  \frac{1}{n}\sum_{i=1}^n\ell(w,z_i), \\
    L_\mu(w)& \triangleq \mathbb{E}_{Z\sim \mu}[\ell(w,Z)],
\end{align}
respectively. In statistical learning, a learning algorithm can be modeled as a randomized mapping from the training set $S$ onto a hypothesis $W\in\mathcal{W}$ according to the conditional distribution $P_{W|S}$. We define the expected generalization error quantifying the degree of over-fitting as
\begin{align}
        {\rm{gen}}(P_{W|S},P_S) &\triangleq \mathbb{E}_{P_{W,S}}[L_\mu(W)-L_e(W,S)]\\
        &=\mathbb{E}_{P_{W}\otimes P_{S}}[L_e(W,S)]-\mathbb{E}_{P_{W,S}}[L_e(W,S)], \nn
\end{align}
where the joint distribution $P_{W,S} =  P_{W|S}\otimes P_S = P_{W|S}\otimes \mu^n$.

Following the framework proposed in~\cite{aminian2023information}, we focus on a specific learning algorithm $P_{W|S}$, i.e., the Gibbs algorithm (or Gibbs posterior~\cite{catoni2007pac}), which is defined as
\begin{equation}\label{eq:Gibbs}
    P^{[n]}_{W|S}(w|s) \triangleq \frac{\pi(w)e^{-\gamma L_e(w,s)}}{V_{L_e}(s,\gamma)}.
\end{equation}
Here, $\gamma$ is the inverse temperature, $\pi(w)$ is an arbitrarily chosen prior distribution over ${ \mathcal W}$, and
$V_{L_e}(s,\gamma) \triangleq \int_{\mathcal{W}}\pi(w)e^{-\gamma L_e(w,s)}dw$ is the partition function that normalizes the distribution.

As shown in~\cite{aminian2021exact,aminian2023information}, an important property of the Gibbs algorithm is that its generalization error can be exactly characterized using the symmetrized KL information:
\begin{equation}
    {\rm{gen}}(P_{W|S},P_S) = I_{\mathrm{SKL}}(W;S)/\gamma.
\end{equation}

\subsection{Other Notations}
We will adopt the following notations to express the asymptotic scaling of quantities with $n$: $f(n) = O(g(n))$ represents that there exists a constant $c$ s.t. $|f (n)| \le c g(n)$; $f(n)=\Theta(g(n))$ when there exist two constants $c_1>0$, $c_2>0$ s.t. $c_1g(n)\le f(n)\le c_2g(n)$; $f(n)=o(g(n))$ when $\lim_{n\to\infty}(f(n)/g(n))=0$; and $f(n)\sim g(n)$ when $\lim_{n\to\infty}(f(n)/g(n))=1$.

To simplify notation, we denote a probability measure or its corresponding probability density function by $P_{W}$ when there is no ambiguity.
We use $P_{W|Z^n}$ to represent the conditional probability density function, with the capital $W,Z$ representing that it is also a random variable. 



Throughout the paper, we will consider the Gibbs algorithm with a fixed inverse temperature $\gamma$ and study its asymptotic behavior as the number of training samples $n \to \infty$. 
It is convenient for us to define the Gibbs algorithm using the population risk, i.e.,
\begin{equation}
    P_{W}^{\infty}(w) \triangleq \frac{\pi(w)e^{-\gamma L_\mu(w)}}{\int_{\mathcal{W}}\pi(w)e^{-\gamma L_\mu(w)}dw},
\end{equation}
and the expectation of any measurable function $f(\cdot)$ under $P_{W}^{\infty}$ is denoted as
\begin{equation}
        \mathbb{E}_{W}^{\infty}[f(W)] \triangleq \int_{\mathcal{W}} P_{W}^{\infty}(w) f(w)dw.
\end{equation}




\section{Non-asymptotic Results}\label{:Sec3}
Motivated by the idea of using individual sample information measures proposed in~\cite{bu2020tightening}, we first investigate the connection between the joint symmetrized KL information and its individual sample counterpart for the Gibbs algorithm. 

The following theorem states that the difference between these two information measures can be characterized using the Jensen gap. 
\begin{theorem}\label{:thm1}
    For joint distribution $P_{W,S}$ induced by the Gibbs algorithm, we have
    \begin{align}\label{equ:Th1 gap}
            &\sum_{i=1}^nI_{\mathrm{SKL}}(W;Z_i)-I_{\mathrm{SKL}}(W;S)\nn\\            =&\sum_{i=1}^n\bigg(\!\mathbb{E}_{P_{W,Z_i}}[J_{i}^{[n]}(W,Z_i)]\!-\!\mathbb{E}_{P_{W}\otimes P_{Z_i}}[J_{i}^{[n]}(W,Z_i)]\!\bigg), 
    \end{align}
where the Jensen gap $J_i^{[n]}(w,z_i)$ is defined as
    \begin{align}\label{equ:Th1_def1}
        J_i^{[n]}(w,z_i)&\triangleq \log\int_{\mathcal{Z}^{n-1}}P^{[n]}_{W|S}(w|z_i,z^{-i}) d\mu^{n-1}(z^{-i})
        \\&\quad -\int_{\mathcal{Z}^{n-1}}\log \Big(P^{[n]}_{W|S}(w|z_i,z^{-i})\Big) d\mu^{n-1}(z^{-i}), \nn
    \end{align}
  with $z^{-i} \triangleq \{z_1,\cdots,z_{i-1},z_{i+1},\cdots,z_n \}$. 
\end{theorem}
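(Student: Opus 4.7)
The plan is to rewrite each $I_{\mathrm{SKL}}(W;Z_i)$ as a difference of expectations against a shared log-density, insert the Gibbs form of $P^{[n]}_{W|S}$ to isolate the single joint contribution $-\tfrac{\gamma}{n}\ell(w,z_i)$, and then invoke the exact generalization-error identity $I_{\mathrm{SKL}}(W;S) = \gamma\,{\rm{gen}}(P_{W|S},P_S)$ to produce the cancellation. First I would observe that for any joint $P_{W,Y}$, the identity
\begin{equation*}
I_{\mathrm{SKL}}(W;Y) = \mathbb{E}_{P_{W,Y}}[\log P_{W|Y}(W|Y)] - \mathbb{E}_{P_W\otimes P_Y}[\log P_{W|Y}(W|Y)]
\end{equation*}
holds, since the $\log P_W$ piece of $\log(P_{W|Y}/P_W)$ integrates to the same value against both measures. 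Applying this with $Y=Z_i$ and with $Y=S$, and using the i.i.d. identity $P_{W|Z_i}(w|z_i) = \int P^{[n]}_{W|S}(w|z_i,z^{-i})d\mu^{n-1}(z^{-i})$, the very definition of $J_i^{[n]}$ yields
\begin{equation*}
\log P_{W|Z_i}(w|z_i) = \int \log P^{[n]}_{W|S}(w|z_i,z^{-i})d\mu^{n-1}(z^{-i}) + J_i^{[n]}(w,z_i).
\end{equation*}

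Next, I would substitute $\log P^{[n]}_{W|S}(w|s) = \log\pi(w) - \gamma L_e(w,s) - \log V_{L_e}(s,\gamma)$ into the inner integral. Splitting $L_e(w,s) = \tfrac{1}{n}\ell(w,z_i) + \tfrac{1}{n}\sum_{j\neq i}\ell(w,z_j)$ and integrating the second sum against $\mu^{n-1}$ contributes the deterministic term $\tfrac{n-1}{n}L_\mu(w)$. The inner integral therefore admits the decomposition $A(w) - \tfrac{\gamma}{n}\ell(w,z_i) + B(z_i)$, where $A$ depends only on $w$ and $B$ only on $z_i$. Because $P_{W,Z_i}$ and $P_W\otimes P_{Z_i}$ share both their $W$- and $Z_i$-marginals, the $A$- and $B$-contributions vanish under the difference $\mathbb{E}_{P_{W,Z_i}}[\cdot] - \mathbb{E}_{P_W\otimes P_{Z_i}}[\cdot]$, leaving only the joint term $-\tfrac{\gamma}{n}\ell(W,Z_i)$.

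Summing over $i=1,\ldots,n$ assembles $\tfrac{1}{n}\sum_i\ell(W,Z_i) = L_e(W,S)$, so the surviving piece equals
\begin{equation*}
-\gamma\bigl(\mathbb{E}_{P_{W,S}}[L_e(W,S)] - \mathbb{E}_{P_W\otimes P_S}[L_e(W,S)]\bigr) = \gamma\,{\rm{gen}}(P_{W|S},P_S) = I_{\mathrm{SKL}}(W;S),
\end{equation*}
by the exact generalization-error characterization. Subtracting this copy of $I_{\mathrm{SKL}}(W;S)$ from $\sum_i I_{\mathrm{SKL}}(W;Z_i)$ leaves exactly the Jensen-gap sum on the right-hand side of~\eqref{equ:Th1 gap}. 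The main obstacle is really just careful bookkeeping: verifying that the $A(W)$- and $B(Z_i)$-only terms truly vanish under the signed expectation and that the $1/n$ factors aggregate correctly, since by construction of $J_i^{[n]}$ the identity is exact and no analytic inequalities are needed.
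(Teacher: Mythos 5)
Your proposal is correct and follows essentially the same route as the paper's proof: rewrite $I_{\mathrm{SKL}}(W;Z_i)$ as a signed expectation of $\log P_{W|Z_i}$, swap in the mixture representation $P_{W|Z_i}=\int P^{[n]}_{W|S}\,d\mu^{n-1}$ at the cost of the Jensen gap, and observe that after substituting the Gibbs form only the $-\tfrac{\gamma}{n}\ell(w,z_i)$ term survives the difference of expectations, summing to $\gamma\,{\rm gen}=I_{\mathrm{SKL}}(W;S)$. The bookkeeping you flag (the $A(w)$- and $B(z_i)$-only terms vanishing because the two measures share marginals) is exactly the cancellation the paper carries out explicitly in its chain of equalities.
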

See Appendix~\ref{appendixA} for the proof.
We note that this theorem holds whenever the samples $S$ are drawn independently but not necessarily identically generated from the distribution $\mu$. 

\begin{remark}
As the $\log$ function is concave, the Jensen gap $J_i^{[n]}(w,z_i)$ is always non-negative. However, the RHS of \eqref{equ:Th1 gap} can be either negative or positive. An example showing that $I_{\mathrm{SKL}}(W;S)$ can be either larger or smaller than $\sum_{i=1}^nI_{\mathrm{SKL}}(W;Z_i)$
can be found in~\cite[Example 1]{aminian2021exact}.
\end{remark}

It is worth mentioning that the Jensen gap $J_i^{[n]}(w,z_i)$ in Theorem~\ref{:thm1} has its own operational meaning by making the connection to the worst-case data-generating distribution introduced in~\cite{zou2024worst}. A detailed discussion can be found in Appendix~\ref{appendixA}.
Other than this, interpreting this Jensen gap directly through finite sample analysis is challenging, prompting us to delve into the asymptotic regime in the next section.

\section{Asymptotic Results}\label{:Sec4}
In this section, we provide an asymptotic analysis of different information measures with i.i.d samples, e.g., the joint symmetrized KL information
and its individual sample counterpart for the Gibbs algorithm.

\subsection{Asymptotics of Individual Sample Information Measures}
We start by rigorously defining the limiting probability space $({\mathcal W}\times {\mathcal Z}^\infty,\mathcal{F}^\infty,P_{W}^{\infty}\otimes P_{Z^\infty})$ in the following definition. 
\begin{definition}
    As the training data were i.i.d sampled from data distribution $\mu$, 
    there exists a filtered probability space $({\mathcal Z}^\infty,\mathcal{F}_{Z^\infty},\{\mathcal{F}_{Z^n}^{[n]}\},P_{Z^\infty})$ where
    \begin{equation}    \mathcal{F}_{Z^n}^{[n]}=\sigma(Z_1,Z_2,\dots,Z_n),\     \mathcal{F}_{Z^\infty}=\sigma\bigg(\bigcup_n\mathcal{F}_{Z^n}^{[n]}\bigg).
    \end{equation}
    We define a probability space $(\mathcal{W},\mathcal{B},P_{W}^{\infty})$
    and the following product probability space 
    \begin{align}
    &({\mathcal W}\times {\mathcal Z}^\infty,\mathcal{F}^\infty,\{\mathcal{F}_{W,Z^n}^{[n]}\},P_{W}^{\infty}\otimes P_{Z^\infty})\nn\\
    \triangleq &({\mathcal W},\mathcal{B},P_{W}^{\infty})\times({\mathcal Z}^\infty,\mathcal{F}_{Z^\infty},\{\mathcal{F}_{Z^n}^{[n]}\},P_{Z^\infty}).
    \end{align}
  For every sub-$\sigma$-algebra $\mathcal{F}_{Z^n}^{[n]}$, $P_{W,Z^n}^{[n]}$ is the probability measure induced by the Gibbs algorithm and the distribution of the dataset with size $n$, and $P_{W,Z_i}^{[n]}$ is the marginalization of $P_{W,Z^n}^{[n]}$.
\end{definition}

Now, we are ready to study the asymptotic behavior of the Gibbs algorithm. We start by presenting the following two lemmas that capture the limit of the joint distribution $P_{W,Z^n}^{[n]}$.
\begin{lemma}
For non-negative loss $\ell(w,z)\ge 0$, we have
\begin{equation}
    \lim_{n\to\infty}\left(\frac{dP_{W,Z^n}^{[n]}}{dP_{W}^{\infty}\otimes P_{Z^\infty}}\right)=1\ \ a.s.
\end{equation}
\end{lemma}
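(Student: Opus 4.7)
The plan is to reduce the claim to explicit convergence statements about the empirical and population risks via a closed-form expression for the Radon--Nikodym derivative. Restricted to the $\sigma$-algebra $\mathcal{F}_{W,Z^n}^{[n]}$, both measures admit densities with respect to $\pi\otimes\mu^n$: the numerator has density $\pi(w)e^{-\gamma L_e(w,z^n)}/V_{L_e}(z^n,\gamma)$, while the denominator has density $\pi(w)e^{-\gamma L_\mu(w)}/V_{L_\mu}(\gamma)$ (with $V_{L_\mu}(\gamma)=\int_{\mathcal{W}}\pi(w)e^{-\gamma L_\mu(w)}dw$, finite and positive). Dividing yields
\begin{equation}
\frac{dP_{W,Z^n}^{[n]}}{dP_{W}^{\infty}\otimes P_{Z^\infty}}(w,z^\infty) = \frac{V_{L_\mu}(\gamma)}{V_{L_e}(z^n,\gamma)}\cdot \exp\bigl(-\gamma(L_e(w,z^n)-L_\mu(w))\bigr),
\end{equation}
so it suffices to show that each of these two factors tends to $1$ almost surely.

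For the exponential factor, I would appeal to the strong law of large numbers: for every fixed $w\in\mathcal{W}$, the i.i.d.\ sequence $\{\ell(w,Z_i)\}$ satisfies $L_e(w,z^n)\to L_\mu(w)$ for $P_{Z^\infty}$-a.e.\ $z^\infty$. A Fubini argument converts this family of $w$-indexed a.s.\ statements into a single a.s.\ statement on the product space $(\mathcal{W}\times\mathcal{Z}^\infty,P_W^\infty\otimes P_{Z^\infty})$, so that $\exp(-\gamma(L_e-L_\mu))\to 1$ almost surely. For the partition-function ratio, I would invoke the dominated convergence theorem: since $\ell\ge 0$ and $\gamma>0$, one has the uniform pointwise domination $\pi(w)e^{-\gamma L_e(w,z^n)}\le \pi(w)$, and $\pi$ is integrable. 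Using Fubini once more, for $P_{Z^\infty}$-a.e.\ $z^\infty$ the SLLN convergence $L_e(w,\cdot)\to L_\mu(w)$ holds for $\pi$-a.e.\ $w$, so DCT yields $V_{L_e}(z^n,\gamma)\to V_{L_\mu}(\gamma)>0$ and hence $V_{L_\mu}(\gamma)/V_{L_e}(z^n,\gamma)\to 1$.

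Intersecting the two full-measure sets then delivers the almost-sure convergence of the Radon--Nikodym derivative to $1$. The main technical obstacle is the careful bookkeeping of null sets through Fubini's theorem: the SLLN produces, for each $w$, a null set in $z^\infty$-space that depends on $w$, and these must be swapped with the integration over $w$ both to obtain joint almost-sure convergence of the exponential factor and to legitimize the pointwise hypothesis needed for DCT on the partition function. The non-negativity hypothesis $\ell\ge 0$ is essential at precisely this latter step, as it supplies the integrable dominating function $\pi(w)$; without it, additional integrability assumptions on $e^{-\gamma\ell(w,\cdot)}$ under $\mu$ would be required.
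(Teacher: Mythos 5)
Your proposal is correct and follows essentially the same route as the paper: factor the Radon--Nikodym derivative into the pointwise exponential term and the partition-function ratio, handle the former with the strong law of large numbers plus a Fubini argument for the null sets, and the latter with dominated convergence. You are in fact slightly more explicit than the paper in identifying $\pi(w)$ as the dominating function and pinpointing that this is exactly where the hypothesis $\ell\ge 0$ is needed.
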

\begin{lemma} 
For non-negative loss $\ell(w,z)\ge 0$ and any individual sample $Z_i$, we have
\begin{equation}        \varliminf_{n\to\infty}\left(\frac{dP_{W,Z_i}^{[n]}}{dP_{W}^{\infty}\otimes P_{Z^\infty}}\right)=1\ \ a.s.
\end{equation}
\end{lemma}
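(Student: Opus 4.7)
The plan is to express the individual-sample Radon--Nikodym derivative as the conditional expectation---given $(W,Z_i)$---of the full-dataset derivative from Lemma 1, and then pin down $\varliminf=1$ by two applications of Fatou's lemma: one pathwise to get a lower bound, and one in expectation to force equality.

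First I would observe that since $P_{W,Z_i}^{[n]}$ is the marginal of $P_{W,Z^n}^{[n]}$ over the coordinates $z^{-i}$, and those coordinates are i.i.d.\ $\mu$-distributed under both $P_{W,Z^n}^{[n]}$ and $P_{W}^{\infty}\otimes P_{Z^\infty}$, the densities factor cleanly and give
\[
\frac{dP_{W,Z_i}^{[n]}}{dP_{W}^{\infty}\otimes P_{Z^\infty}}(w,z^\infty) = \int_{\mathcal{Z}^{n-1}} \frac{dP_{W,Z^n}^{[n]}}{dP_{W}^{\infty}\otimes P_{Z^\infty}}(w,z^n)\, d\mu^{n-1}(z^{-i}).
\]
By Lemma 1 the integrand converges to $1$ a.s.\ under $P_{W}^{\infty}\otimes P_{Z^\infty}$. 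Viewing $z^{-i}$ as an element of $\mathcal{Z}^\infty$ with $P_{Z^\infty}$-law and invoking Fubini, the a.s.\ statement transfers: for $P_{W}^{\infty}\otimes\mu$-almost every $(w,z_i)$, the integrand tends to $1$ for $P_{Z^\infty}$-almost every $z^{-i}$. Since the integrand is nonnegative, Fatou's lemma gives $\varliminf_{n\to\infty}\frac{dP_{W,Z_i}^{[n]}}{dP_{W}^{\infty}\otimes P_{Z^\infty}}\ge 1$ a.s.

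For the matching upper bound, I would use the elementary fact that any Radon--Nikodym derivative of one probability measure with respect to another integrates to $1$, so $\mathbb{E}_{P_{W}^{\infty}\otimes P_{Z^\infty}}\bigl[\tfrac{dP_{W,Z_i}^{[n]}}{dP_{W}^{\infty}\otimes P_{Z^\infty}}\bigr]=1$ for every $n$. A second application of Fatou---this time in expectation---yields $\mathbb{E}\bigl[\varliminf_{n\to\infty}\tfrac{dP_{W,Z_i}^{[n]}}{dP_{W}^{\infty}\otimes P_{Z^\infty}}\bigr]\le 1$. Combined with the pathwise lower bound from the previous step, the nonnegative random variable $\varliminf_n\tfrac{dP_{W,Z_i}^{[n]}}{dP_{W}^{\infty}\otimes P_{Z^\infty}}-1$ has nonpositive expectation and must therefore vanish a.s., which is the claim.

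The main difficulty I anticipate is technical rather than conceptual: one must carefully justify the Fubini passage so that Lemma 1's single a.s.\ statement on the full product space delivers, for $P_{W}^{\infty}\otimes\mu$-almost every $(w,z_i)$, the sectionwise a.s.\ convergence in $z^{-i}$ needed for the first Fatou application. All of the analytic work is packaged inside Lemma 1; the present lemma essentially asserts that marginalization preserves the ``$\ge 1$" limit for free, with the equality pinned by conservation of probability mass.
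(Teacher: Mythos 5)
Your proposal is correct and follows essentially the same route as the paper: marginalize the full-dataset Radon--Nikodym derivative over $z^{-i}$, apply Fatou pathwise (using the a.s.\ convergence underlying Lemma~1) to get $\varliminf \ge 1$, and then use the fact that the derivative integrates to $1$ together with a second application of Fatou to force equality almost surely. The paper phrases the last step as a proof by contradiction on the set where the liminf exceeds $1$, but that is the same argument as your ``nonnegative variable with nonpositive expectation vanishes a.s.''
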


These two lemmas rigorously confirm the intuition that as $n\to\infty$, the asymptotic joint distribution $P_{W,Z^n}^{[n]}$ will converge to a product measure $P_{W}^{\infty}\otimes P_{Z^\infty}$, i.e., the learned hypothesis $W$ depends solely on the data distribution $\mu$ and is independent of the dataset $S$. It is worth mentioning that this result is widely applicable, as it only requires the loss function to be non-negative or lower-bounded.


\begin{corollary}\label{:Cor1}
    If we further assume that the loss function is bounded, i.e., $\ell(w,z)\in [0, C]$, we have that $\big(\frac{dP_{W,Z^n}^{[n]}}{dP_{W}^{\infty}\otimes P_{Z^\infty}}\big)$ and $\big(\frac{dP_{W,Z_i}^{[n]}}{dP_{W}^{\infty}\otimes P_{Z^\infty}}\big)$ are both uniformly bounded. Furthermore, $\lim_{n\to\infty}\big(\frac{dP_{W,Z_i}^{[n]}}{dP_{W}^{\infty}\otimes P_{Z^\infty}}\big)=1$ almost surely.
\end{corollary}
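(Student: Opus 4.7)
The plan is to first derive an explicit form of the Radon--Nikodym derivative $g_n \triangleq dP_{W,Z^n}^{[n]}/d(P_{W}^{\infty}\otimes P_{Z^\infty})$, then deduce uniform two-sided bounds from $\ell\in[0,C]$, and finally upgrade the $\varliminf$ statement of Lemma~2 to a full limit by invoking Lemma~1 together with Fubini and the bounded convergence theorem.

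First, since the $Z^n$-marginal of both $P_{W,Z^n}^{[n]}$ and $P_{W}^{\infty}\otimes P_{Z^\infty}$ equals $\mu^n$, the density reduces to the ratio of the two Gibbs densities on $\mathcal{W}$, in which the prior $\pi(w)$ cancels:
\begin{equation}
g_n(w,z^n)=\frac{P_{W|S}^{[n]}(w|z^n)}{P_{W}^{\infty}(w)}=e^{\gamma(L_\mu(w)-L_e(w,z^n))}\,\frac{V_{L_\mu}(\gamma)}{V_{L_e}(z^n,\gamma)},
\end{equation}
where $V_{L_\mu}(\gamma)\triangleq\int_{\mathcal{W}}\pi(w)e^{-\gamma L_\mu(w)}dw$. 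Under $\ell\in[0,C]$ the exponent lies in $[-\gamma C,\gamma C]$, and since $\pi$ is a probability density, both partition functions belong to $[e^{-\gamma C},1]$ and hence their ratio to $[e^{-\gamma C},e^{\gamma C}]$. Multiplying gives the uniform bound $e^{-2\gamma C}\le g_n\le e^{2\gamma C}$. The marginal density satisfies, by Fubini,
\begin{equation}
f_n(w,z_i)\triangleq \frac{dP_{W,Z_i}^{[n]}}{d(P_{W}^{\infty}\otimes P_{Z^\infty})}(w,z_i)=\int_{\mathcal{Z}^{n-1}}g_n(w,z^n)\,d\mu^{n-1}(z^{-i}),
\end{equation}
which inherits the same uniform bound, establishing the two boundedness claims.

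For the pointwise convergence $f_n\to 1$ almost surely, I would invoke Lemma~1, which gives $g_n\to 1$ a.s.\ under $P_{W}^{\infty}\otimes P_{Z^\infty}$. By Fubini, for $(P_{W}^{\infty}\otimes\mu)$-a.e.\ $(w,z_i)$ the section $g_n(w,z_i,\cdot)$ converges to $1$ for $\mu^\infty$-a.e.\ tail sequence $z^{-i}$. Since $g_n$ depends only on the first $n$ data coordinates, the integral defining $f_n$ may be recast as an integral on the fixed product space $(\mathcal{Z}^\infty,\mu^\infty)$; the uniform bound then licenses the bounded convergence theorem, giving $f_n(w,z_i)\to 1$ on a set of full $(P_{W}^{\infty}\otimes\mu)$-measure.

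The main subtlety is precisely the last step: the measure $\mu^{n-1}$ against which $f_n$ is defined varies with $n$, so some care is needed to recast the integral as one over a single fixed measure before applying bounded convergence. Once this reindexing is made precise, Lemma~1 supplies the pointwise convergence of $g_n$ and the uniform bound established in the first step handles the domination, completing the proof.
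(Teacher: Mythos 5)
Your proposal is correct and follows the same route the paper intends: the paper's own proof is the single sentence ``this is a direct result of the dominated convergence theorem,'' relying implicitly on the explicit density ratio (where $\pi(w)$ cancels and $\ell\in[0,C]$ forces both the exponential factor and the partition-function ratio into $[e^{-\gamma C},e^{\gamma C}]$) and on Lemma~1 plus bounded convergence over $z^{-i}$ to upgrade the $\varliminf$ of Lemma~2 to a full limit. You have simply made explicit the uniform bound $e^{\pm 2\gamma C}$ and the Fubini/reindexing step (recasting $\int_{\mathcal{Z}^{n-1}}(\cdot)\,d\mu^{n-1}$ as an integral over the fixed space $(\mathcal{Z}^\infty,P_{Z^\infty})$), which is exactly the device the paper uses in its proofs of Lemmas~2 and~3.
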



In the following, we will focus on the bounded loss function case. We already know that $dP_W^{[n]}\otimes P_{Z_i}/dP_{W,Z_i}^{[n]}$ converges to 1 as $n\to\infty$, and the following lemma characterizes the exact rate of such convergence. 


\begin{lemma}\label{:Lem3}
If the loss function $\ell(w,z)$ is bounded, we have 
    \begin{align}\label{equ:lem3_aim}
        &\lim_{n\to\infty}n\cdot\bigg(1-\frac{dP_W^{[n]}\otimes P_{Z_i}}{dP_{W,Z_i}^{[n]}}\bigg)\\=&-\gamma[\ell(W,Z_i)-L_\mu(W)]+\mathbb{E}_{W}^{\infty}[\gamma(\ell(W,Z_i)-{L_\mu(W)})]. \nn
    \end{align}
    Additionally, $\displaystyle n\cdot\bigg(1-\frac{dP_W^{[n]}\otimes P_{Z_i}}{dP_{W,Z_i}^{[n]}}\bigg)$ is uniformly bounded.
\end{lemma}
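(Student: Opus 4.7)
The plan is to reduce the Radon--Nikodym derivative to a ratio of marginal densities, Taylor-expand the Gibbs kernel about its large-$n$ limit $P_W^{\infty}$, and identify the $O(1/n)$ leading term. First I would observe that
\begin{equation*}
\frac{dP_W^{[n]}\otimes P_{Z_i}}{dP_{W,Z_i}^{[n]}} = \frac{P_W^{[n]}(w)}{P_{W|Z_i}^{[n]}(w|z_i)},
\end{equation*}
and then substitute the Gibbs form, factoring $V_{L_e}(s,\gamma) = Z_{\infty}\,B(s)$ with $Z_{\infty} \triangleq \int \pi(w) e^{-\gamma L_{\mu}(w)}\,dw$ and $B(s) \triangleq \mathbb{E}_{W}^{\infty}[e^{-\gamma \tilde{L}_e(W,s)}]$, where $\tilde{L}_e(w,s) \triangleq L_e(w,s) - L_{\mu}(w)$. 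This gives the clean factorization $P_{W|S}^{[n]}(w|s) = P_W^{\infty}(w)\,r(w|s)$ with $r(w|s) \triangleq e^{-\gamma \tilde{L}_e(w,s)}/B(s)$. Marginalizing yields
\begin{equation*}
1 - \frac{dP_W^{[n]}\otimes P_{Z_i}}{dP_{W,Z_i}^{[n]}} = \frac{q(w,z_i) - \mathbb{E}_{z_i'}[q(w,z_i')]}{q(w,z_i)},
\end{equation*}
where $q(w,z_i) \triangleq \mathbb{E}_{z^{-i}}[r(w|s)]$, so the task reduces to finding the leading-order behavior of the numerator and denominator separately.

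Next, using the bounded-loss assumption $|\tilde{L}_e| \le C$, I would Taylor-expand $r(w|s) = 1 - \gamma[\tilde{L}_e(w,s) - \bar{m}_n(s)] + \mathcal{R}_n(w,s)$, where $\bar{m}_n(s) \triangleq \mathbb{E}_W^{\infty}[\tilde{L}_e(W,s)] = \frac{1}{n}\sum_j m(z_j)$ with $m(z) \triangleq \mathbb{E}_W^{\infty}[\ell(W,z) - L_{\mu}(W)]$, and $\mathcal{R}_n$ collects all quadratic-and-higher terms. Because $\mathbb{E}_Z[\ell(w,Z) - L_{\mu}(w)] = 0 = \mathbb{E}_Z[m(Z)]$, taking $\mathbb{E}_{z^{-i}}$ kills every $j \neq i$ contribution in the linear term, leaving exactly $-\frac{\gamma}{n}[(\ell(w,z_i) - L_{\mu}(w)) - m(z_i)]$; taking the further expectation $\mathbb{E}_{z_i}$ annihilates this. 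Combined with Corollary~\ref{:Cor1} (which gives $q(w,z_i) \to 1$ almost surely), multiplying by $n$ produces precisely the RHS of~\eqref{equ:lem3_aim}, provided that $n(\mathbb{E}_{z^{-i}}[\mathcal{R}_n] - \mathbb{E}_s[\mathcal{R}_n]) \to 0$.

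The main obstacle is controlling this remainder: each of $\mathbb{E}_{z^{-i}}[\mathcal{R}_n]$ and $\mathbb{E}_s[\mathcal{R}_n]$ is individually only $O(1/n)$, driven by quadratic terms such as $\mathbb{E}_W^{\infty}[\tilde{L}_e^2]$ whose iid-average size is $\mathrm{Var}_Z[\cdot]/n$. The key observation is that, on expanding $\tilde{L}_e(w,s)^2 = n^{-2}\sum_{j,k} g_j(w) g_k(w)$ with $g_j(w) = \ell(w,z_j) - L_{\mu}(w)$ (and similarly for $\bar{m}_n^2$ and the cross terms), the $O(1/n)$ ``bulk'' of $\mathcal{R}_n$ comes from diagonal sums over the $(n-1)$ samples in $z^{-i}$ that do not depend on $z_i$, hence cancels exactly in the difference $\mathbb{E}_{z^{-i}}[\mathcal{R}_n] - \mathbb{E}_s[\mathcal{R}_n]$; the only surviving $z_i$-dependent contributions are isolated single-sample terms of size $O(1/n^2)$, while the cubic Taylor remainder contributes $O(n^{-3/2})$ by a standard bounded-iid moment bound. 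Both vanish after multiplication by $n$. Finally, the uniform-boundedness claim follows from a bounded-differences argument: swapping $z_i \to z_i'$ perturbs $\tilde{L}_e$ and $B$ by at most $O(1/n)$ in a Lipschitz manner, so $|q(w,z_i) - q(w,z_i')| \le K/n$ with $K$ depending only on $\gamma$ and $C$; combined with the lower bound $q(w,z_i) \ge e^{-2\gamma C}$, this bounds $|n(1 - \mathbb{E}_{z_i'}[q]/q)|$ uniformly in $n$, $w$, and $z_i$.
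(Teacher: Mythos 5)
Your proposal is correct, and it reaches the limit by a route that is organized differently from the paper's. Both proofs start from $dP_W^{[n]}\otimes P_{Z_i}/dP_{W,Z_i}^{[n]}=P_W^{[n]}(w)/P_{W|Z_i}^{[n]}(w|z_i)$ and hinge on a Taylor expansion of the Gibbs kernel in the $O(1/n)$ perturbation of the empirical loss, but the mechanisms diverge after that. The paper couples the two marginals by writing both as integrals of Gibbs kernels over the \emph{same} $z^n\sim\mu^n$ and expands the one-sample-swap difference $P_{W|z^n}-P_{W|Z_i,z^{-i}}$, which is pointwise $O(1/n)$ by construction; the limit is then extracted by dominated convergence plus the strong law of large numbers, and uniform boundedness comes from separate lower bounds on the denominator and uniform bounds on the Lagrange remainder. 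You instead factor out $P_W^{\infty}$, expand each marginal around $1$ via $q(w,z_i)=\mathbb{E}_{z^{-i}}[r(w|s)]$, and compute the expectation of each Taylor order exactly, using independence and the mean-zero structure of $\ell(w,Z)-L_\mu(w)$ to kill cross terms; the crucial extra step your route requires --- and which you correctly supply --- is that the individually $O(1/n)$ diagonal (quadratic) contributions to $\mathbb{E}_{z^{-i}}[\mathcal{R}_n]$ and $\mathbb{E}_s[\mathcal{R}_n]$ agree up to $O(1/n^2)$, so they cancel in the difference, with the cubic remainder handled by a Rosenthal-type moment bound. This exact-moment-plus-cancellation strategy is in fact closer to the paper's own proof of Theorem~\ref{:Thm5} than to its proof of Lemma~\ref{:Lem3}, and it buys a transparent finite-$n$ identification of the constant without invoking DCT for the main limit; the paper's coupling buys pointwise (rather than only in-expectation) smallness of the numerator. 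Your bounded-differences argument for the uniform bound ($|q(w,z_i)-q(w,z_i')|\le K/n$ together with $q\ge e^{-2\gamma C}$) is also cleaner than the paper's estimates and is fully rigorous under the bounded-loss assumption.
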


Built upon this Lemma, we provide the following theorem that characterizes the convergence rate of $I_{\mathrm{SKL}}(W;Z_i)$ with a tight constant factor as $n\to \infty$.

\begin{theorem}\label{:Thm2}
If the loss function $\ell(w,z)$ is bounded, we have
\begin{align}\label{sim:Iskl}
    I_{\mathrm{SKL}}(W;Z_i) & \sim\frac{\gamma^2}{n^2}\mathbb{E}_{\mu}\bigg[\mathbb{E}_{W}^{\infty}\big[(\ell(W,Z)-L_\mu(W))^2\big]\nn\\&\qquad \ \ \quad-\mathbb{E}_{W}^{\infty}\big[(\ell(W,Z)-L_\mu(W))\big]^2\bigg]. 
\end{align}
\end{theorem}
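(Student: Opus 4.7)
The plan is to express $I_{\mathrm{SKL}}(W;Z_i)$ in Radon--Nikodym form, Taylor-expand the integrand to second order in the ``discrepancy from independence'' whose rate is controlled by Lemma~\ref{:Lem3}, and apply dominated convergence to identify the $1/n^2$ leading term.

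Setting $R_n \triangleq dP_W^{[n]} \otimes P_{Z_i}/dP_{W,Z_i}^{[n]}$, the formula for $D_{\mathrm{SKL}}$ given in Section~II together with a change of measure yields
\begin{equation*}
    I_{\mathrm{SKL}}(W;Z_i) = \mathbb{E}_{P_{W,Z_i}^{[n]}}\bigl[(R_n - 1)\log R_n\bigr].
\end{equation*}
By Lemma~\ref{:Lem3}, $\epsilon_n \triangleq n(1 - R_n)$ is uniformly bounded and converges almost surely to
\begin{equation*}
    \epsilon \triangleq \gamma\bigl(\mathbb{E}_{W}^{\infty}[\ell(W,Z_i) - L_\mu(W)] - [\ell(W,Z_i) - L_\mu(W)]\bigr).
\end{equation*}
Substituting $R_n = 1 - \epsilon_n/n$ into $\log(1-x) = -x - x^2/2 + O(x^3)$ then gives
\begin{equation*}
    n^2(R_n - 1)\log R_n = \epsilon_n^2 + O(\epsilon_n^3/n),
\end{equation*}
which is uniformly bounded and tends almost surely to $\epsilon^2$.

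Next, I would justify the exchange of limit and expectation. Passing to the reference measure,
\begin{equation*}
    n^2 I_{\mathrm{SKL}}(W;Z_i) = \int n^2(R_n-1)\log R_n \cdot \frac{dP_{W,Z_i}^{[n]}}{d(P_{W}^{\infty}\otimes P_{Z^\infty})}\, d(P_{W}^{\infty}\otimes P_{Z^\infty}),
\end{equation*}
Corollary~\ref{:Cor1} guarantees that the density factor is uniformly bounded and tends almost surely to $1$; combined with the uniform bound established above, dominated convergence yields
\begin{equation*}
    \lim_{n\to\infty} n^2 I_{\mathrm{SKL}}(W;Z_i) = \mathbb{E}_{P_{W}^{\infty}\otimes \mu}[\epsilon^2].
\end{equation*}

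Finally, because $\mathbb{E}_{W}^{\infty}[\epsilon] = 0$ by construction, conditioning on $Z_i = z$ gives $\mathbb{E}_{W}^{\infty}[\epsilon^2] = \gamma^2\bigl(\mathbb{E}_{W}^{\infty}[(\ell(W,z) - L_\mu(W))^2] - \mathbb{E}_{W}^{\infty}[\ell(W,z) - L_\mu(W)]^2\bigr)$, and taking the outer expectation over $z \sim \mu$ reproduces the RHS of \eqref{sim:Iskl}. The main obstacle I anticipate is confirming that the cubic Taylor remainder truly vanishes in expectation after multiplication by $n^2$; this is handled cleanly by the uniform boundedness clauses in Lemma~\ref{:Lem3} and Corollary~\ref{:Cor1}, so that a single dominated-convergence step absorbs both the Taylor remainder and the convergence of the measure, rather than requiring a separate uniform-integrability estimate.
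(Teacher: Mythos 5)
Your proposal is correct and follows essentially the same route as the paper's proof: the same Radon--Nikodym representation $\mathbb{E}_{P_{W,Z_i}^{[n]}}[(R_n-1)\log R_n]$, the same second-order Taylor expansion in $x = 1-R_n$ with the cubic remainder killed by the uniform bound $|n(1-R_n)|\le C_1$ from Lemma~\ref{:Lem3}, the same single dominated-convergence step using the uniform boundedness of the density from Corollary~\ref{:Cor1}, and the same final identification of the limit as the variance of $\gamma(\ell(W,z)-L_\mu(W))$ under $P_W^{\infty}$ averaged over $z\sim\mu$. No gaps.
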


The constant on the right-hand side of \eqref{sim:Iskl} can be interpreted as the variance of $\ell(W,Z)-L_\mu(W)$, which is always non-negative. It is also strictly positive unless $\ell(w,z)$ is a constant for every fixed $w$. 




The proof of Lemma~\ref{:Lem3} and Theorem~\ref{:Thm2} mainly use the strong law of large numbers and the dominated convergence theorem, and more details can be found in Appendix~\ref{appendixB}.
As shown in the following corollaries, the same technique can also be applied to other information measures, specifically, the $\chi^2$ information.
\begin{corollary}
The $\chi^2$ information has the similar rate if the loss function $\ell(w,z)$ is bounded, i.e.,
    \begin{equation}
        I_{\chi^2}(W;Z_i)= \Theta\bigg(\frac{1}{n^2}\bigg),
    \end{equation}
    furthermore,
    \begin{equation}
        I_{\chi^2}(W;Z_i)\sim I_{\mathrm{SKL}}(W;Z_i).
    \end{equation}
\end{corollary}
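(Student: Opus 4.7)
My plan is to mirror the proof of Theorem~\ref{:Thm2}: write $I_{\chi^2}(W;Z_i)$ as an expectation involving the Radon--Nikodym ratio that Lemma~\ref{:Lem3} already controls, and then pass to the limit via dominated convergence. Setting $U_n := dP_{W,Z_i}^{[n]}/d(P_W^{[n]}\otimes P_{Z_i})$ and $V_n := 1 - U_n^{-1}$, the standard change-of-measure identity
\begin{equation*}
I_{\chi^2}(W;Z_i) \;=\; \int (U_n-1)^2 \, d(P_W^{[n]}\otimes P_{Z_i}) \;=\; \int V_n^2\, U_n \, dP_{W,Z_i}^{[n]}
\end{equation*}
rewrites the target quantity as $n^2 I_{\chi^2}(W;Z_i) = \int (nV_n)^2\, U_n \, dP_{W,Z_i}^{[n]}$, which is precisely the form Lemma~\ref{:Lem3} was designed for.

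Lemma~\ref{:Lem3} then supplies $nV_n \to T(W,Z_i)$ almost surely, where $T(W,Z_i) := -\gamma[\ell(W,Z_i) - L_\mu(W)] + \gamma\, \mathbb{E}_W^\infty[\ell(W,Z_i) - L_\mu(W)]$, together with a uniform bound on $nV_n$. Corollary~\ref{:Cor1} gives $U_n \to 1$ a.s.\ and $U_n$ uniformly bounded (using that the Gibbs marginal density $dP_W^{[n]}/dP_W^\infty$ is bounded above and below when $\ell\in[0,C]$). Hence $(nV_n)^2 U_n$ is uniformly bounded and converges a.s.\ to $T(W,Z_i)^2$. To apply dominated convergence with a fixed reference, I would transfer the integral to $P_W^\infty\otimes\mu$ via $\rho_n := dP_{W,Z_i}^{[n]}/d(P_W^\infty\otimes\mu)$, which is also uniformly bounded and tends to $1$ almost surely by (the marginal version of) Corollary~\ref{:Cor1}, obtaining
\begin{equation*}
n^{2} I_{\chi^2}(W;Z_i) \;\longrightarrow\; \mathbb{E}_{P_W^\infty\otimes\mu}\bigl[T(W,Z_i)^2\bigr].
\end{equation*}

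A direct expansion of the square finishes the argument. Writing $g(Z_i) := \mathbb{E}_W^\infty[\ell(W,Z_i) - L_\mu(W)]$ (a function of $Z_i$ only), Fubini shows that the cross term in $\mathbb{E}[T^2]$ collapses to $-2\gamma^2\, \mathbb{E}_\mu[g(Z_i)^2]$, so
\begin{equation*}
\mathbb{E}_{P_W^\infty\otimes\mu}[T^2] = \gamma^2\, \mathbb{E}_\mu\!\left[\mathbb{E}_W^\infty[(\ell(W,Z)-L_\mu(W))^2] - (\mathbb{E}_W^\infty[\ell(W,Z)-L_\mu(W)])^2\right]\!,
\end{equation*}
which is exactly the constant appearing in Theorem~\ref{:Thm2}. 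Hence $n^2 I_{\chi^2}(W;Z_i)$ and $n^2 I_{\mathrm{SKL}}(W;Z_i)$ share the same positive limit (positive because the bracketed quantity is a variance, nondegenerate under the remark following Theorem~\ref{:Thm2}), which simultaneously yields $I_{\chi^2}(W;Z_i)\sim I_{\mathrm{SKL}}(W;Z_i)$ and $I_{\chi^2}(W;Z_i)=\Theta(1/n^2)$. The main difficulty I anticipate is the dominated-convergence step with a reference measure that changes with $n$: rigorously transporting the almost-sure convergence of Lemma~\ref{:Lem3} onto the common product space $(\mathcal{W}\times\mathcal{Z}^\infty, P_W^\infty\otimes P_{Z^\infty})$ and combining the two uniform bounds (from Lemma~\ref{:Lem3} and Corollary~\ref{:Cor1}) into a single integrable envelope is the delicate point.
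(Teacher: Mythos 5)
Your proposal is correct and follows essentially the route the paper intends: the paper gives no separate proof for this corollary but explicitly states that "the same technique" as Theorem~2 applies, and your argument is exactly that — express the $\chi^2$ information through the quantity $1-dP_W^{[n]}\otimes P_{Z_i}/dP_{W,Z_i}^{[n]}$ controlled by Lemma~3, pass to the limit by dominated convergence, and recover the identical variance constant, so $n^2 I_{\chi^2}(W;Z_i)$ and $n^2 I_{\mathrm{SKL}}(W;Z_i)$ share the same limit. Your exact identity $I_{\chi^2}=\int V_n^2 U_n\,dP_{W,Z_i}^{[n]}$ even avoids the Lagrange remainder needed in the symmetrized-KL case, which is a minor simplification rather than a different method.
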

\begin{corollary}\label{:Cor2.5}
If the loss function $\ell(w,z)$ is bounded, we have
    \begin{equation}
        I(W;Z_i)= O\bigg(\frac{1}{n^2}\bigg).
    \end{equation}
\end{corollary}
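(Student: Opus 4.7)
The plan is to piggyback directly on Theorem~\ref{:Thm2} via the decomposition $I_{\mathrm{SKL}}(W;Z_i) = I(W;Z_i) + L(W;Z_i)$ recorded in~\eqref{equ: Iskl=I+L}. The lautum information $L(W;Z_i) = D(P_W\otimes P_{Z_i}\Vert P_{W,Z_i})$ is itself a KL divergence and hence non-negative, which yields the immediate sandwich
\[
0 \;\le\; I(W;Z_i) \;\le\; I_{\mathrm{SKL}}(W;Z_i).
\]
Since Theorem~\ref{:Thm2} has already established $I_{\mathrm{SKL}}(W;Z_i) = O(1/n^2)$ under bounded loss (the $\sim$ in \eqref{sim:Iskl} supplies the matching upper constant), chaining these two facts delivers the corollary at once.

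As a more self-contained alternative, the Taylor-expansion technique used for Theorem~\ref{:Thm2} applies almost verbatim. Writing $r_n \triangleq dP_W^{[n]}\otimes P_{Z_i}/dP_{W,Z_i}^{[n]}$, Lemma~\ref{:Lem3} provides both the a.s.\ convergence of $n(1-r_n)$ to a bounded limit and a uniform bound on $n(1-r_n)$. Expanding $\log(1/r_n) = (1-r_n) + \tfrac{1}{2}(1-r_n)^2 + O((1-r_n)^3)$ and integrating under $P_{W,Z_i}^{[n]}$, the linear term vanishes because $\mathbb{E}_{P_{W,Z_i}^{[n]}}[r_n] = 1$, while dominated convergence applied to the quadratic term (via the uniform bound from Lemma~\ref{:Lem3}) gives $I(W;Z_i) = \tfrac{1}{2n^2}\mathbb{E}\bigl[(n(1-r_n))^2\bigr] + O(1/n^3) = O(1/n^2)$.

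The only subtle point, in either route, is the degenerate regime in which the variance-like constant on the right-hand side of \eqref{sim:Iskl} vanishes (i.e., $\ell(w,\cdot)$ is almost-surely constant in $z$ for every fixed $w$). The $\sim$ statement is then vacuous, but the underlying proof of Theorem~\ref{:Thm2} still shows $n^2\cdot I_{\mathrm{SKL}}(W;Z_i)$ is bounded, so the $O(1/n^2)$ bound is unconditional and the corollary follows. No other genuine technical obstacle appears; the whole argument is a short consequence of what Theorem~\ref{:Thm2} already proves.
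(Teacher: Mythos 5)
Your primary argument is exactly the paper's proof: the corollary follows from $I(W;Z_i)\le I_{\mathrm{SKL}}(W;Z_i)$ (non-negativity of the lautum information) combined with Theorem~\ref{:Thm2}, and your handling of the degenerate case where the variance constant in \eqref{sim:Iskl} vanishes is a correct reading of what the proof of Theorem~\ref{:Thm2} actually establishes. Your self-contained alternative is also sound --- by integrating the linear term exactly (it is identically zero for every finite $n$) and applying dominated convergence only to the quadratic remainder, it sidesteps the loss to $o(1/n)$ that Remark~\ref{:Rmk1.5} attributes to a naive application of the Theorem~\ref{:Thm2} technique --- but it is not needed for the statement.
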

\begin{remark}\label{:Rmk1.5}
Corollary~\ref{:Cor2.5} is directly obtained using Theorem~\ref{:Thm2} and the fact that $I(W;Z_i)\leq I_{\mathrm{SKL}}(W;Z_i)$. 
However, if we directly use the bounding technique used in Theorem~\ref{:Thm2} for mutual information, it yields a weaker conclusion $I(W;Z_i)=o\big(\frac{1}{n}\big)$. One possible reason is that mutual information corresponds to f-divergence with $f(x) = x\log x$, which is not consistently positive for all $x >0$. On the other hand, $f(x) = x\log x-\log x$ for symmetrized KL information, which is always non-negative. Therefore, swapping the expectation and the limit in the proof of Theorem~\ref{:Thm2} will have a smaller impact on the analysis, leading to a more refined characterization.
The same argument applies to $\chi^2$ divergence as well.
\end{remark}


\subsection{Asymptotics of the Gap}
In this subsection, we focus on the gap between the sum of $I_{\mathrm{SKL}}(W;Z_i)$ and $I_{\mathrm{SKL}}(W;S)$ in the asymptotic regime. Our goal is to prove that this gap converges to zero faster than $I_{\mathrm{SKL}}(W;S)$, i.e., the generalization error itself. We begin by presenting two lemmas that capture the asymptotic behaviors of the Jensen gap $J_i^{[n]}(w,z_i)$ defined in Thorem~\ref{:thm1}. 
\begin{lemma}\label{:Lem4}
 If the loss function $\ell(w,z)$ is bounded, there exists a sequence of functions $\hat{J}^{[n]}(w)$ independent of $z_i$ such that
    \begin{equation}        \lim_{n\to\infty}n\cdot(\hat{J}^{[n]}(w)-J_i^{[n]}(w,z_i))=0.
    \end{equation}
    Furthermore, $n\cdot(\hat{J}^{[n]}(w)-J_i^{[n]}(w,z_i))$ is uniformly bounded.
\end{lemma}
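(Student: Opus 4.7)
The plan is to expand the Jensen gap directly from the Gibbs form. Writing $P^{[n]}_{W|S}(w|s) = P_W^\infty(w)\,e^{T(w,s)}/R(s)$ with $T(w,s) \triangleq \gamma(L_\mu(w) - L_e(w,s))$ and $R(s) \triangleq V_{L_e}(s,\gamma)/\int\pi(w')e^{-\gamma L_\mu(w')}dw'$, the $P_W^\infty(w)$ prefactor drops out of the Jensen gap, giving $J_i^{[n]}(w,z_i) = \log\mathbb{E}_{Z^{-i}}[Y_{w,z_i}] - \mathbb{E}_{Z^{-i}}[\log Y_{w,z_i}]$ with $Y_{w,z_i} \triangleq e^{T(w,(z_i,Z^{-i}))}/R((z_i,Z^{-i}))$. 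This is a cleaner starting point because the only way $z_i$ enters $Y_{w,z_i}$ is through a single sample in $T$ and through a single sample in the partition function inside $R$.

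Next I would isolate the $z_i$-dependence by splitting $T(w,(z_i,z^{-i})) = T_{-i}(w,z^{-i}) - \gamma(\ell(w,z_i)-L_\mu(w))/n$ and Taylor expanding both $e^{-\gamma(\ell(w,z_i)-L_\mu(w))/n}$ and the partition-function correction $\log[R((z_i,z^{-i}))/R^{-i}(z^{-i})]$ to first order in $1/n$. The key observation is that the linear-in-$z_i$ term $-\gamma(\ell(w,z_i)-L_\mu(w))/n$ contributes identically to both the $\log\mathbb{E}_{Z^{-i}}$ piece and the $\mathbb{E}_{Z^{-i}}\log$ piece, so it cancels inside the Jensen gap. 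The leftover $z_i$-dependent residuals at order $1/n$ take the form $\tfrac{1}{n}[A_n(w,z_i) - B_n(w,z_i)]$, where $A_n$ and $B_n$ each converge almost surely to $\gamma\mathbb{E}_W^\infty[\ell(W,z_i)-L_\mu(W)]$ by the strong law of large numbers applied to the $Z^{-i}$-averages of Gibbs-tilted quantities. Invoking dominated convergence, justified by Corollary~\ref{:Cor1} which gives uniform bounds on the relevant Radon--Nikodym derivatives under bounded loss, yields $A_n - B_n = o(1)$, so the whole residual is $o(1/n)$. The higher-order terms from the Taylor expansion are $O(1/n^2)$ uniformly in $(w,z_i)$ and are absorbed into the remainder.

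I would then set $\hat{J}^{[n]}(w)$ to be the $z_i$-independent part of the expansion, namely $\log\mathbb{E}_{Z^{-i}}[e^{T_{-i}(w,Z^{-i})}/R^{-i}(Z^{-i})] + \mathbb{E}_{Z^{-i}}[\log R^{-i}(Z^{-i})]$, which depends only on $w$ and $n$ and, by the i.i.d.\ symmetry, not on $i$ either. The desired limit $n(\hat{J}^{[n]}(w) - J_i^{[n]}(w,z_i)) \to 0$ is immediate from the analysis above. Uniform boundedness of $n(\hat{J}^{[n]}(w) - J_i^{[n]}(w,z_i))$ follows from the uniform bound on $\gamma|\ell(w,z)-L_\mu(w)|$ (hence on $A_n$ and $B_n$) together with the uniform $O(1/n^2)$ bound on the Taylor remainder, both guaranteed by $\ell\in[0,C]$.

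The main obstacle, in my view, is executing the cancellation rigorously rather than performing it formally: each of $A_n$ and $B_n$ is $\Theta(1)$ individually, and the conclusion hinges on their almost sure limits matching exactly so that only an $o(1)$ residual survives after subtraction. This matching is precisely what requires dominated convergence on expectations of Gibbs-tilted random variables, rather than a simple probabilistic tail bound on the tilt, and it is the step where the bounded-loss hypothesis, via Corollary~\ref{:Cor1}, is essential.
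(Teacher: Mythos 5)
Your proposal is correct and takes essentially the same route as the paper's proof: isolate the uniformly $O(1/n)$ perturbation that $z_i$ induces in the posterior (you via a multiplicative tilt of $Y$ after factoring out $P_{W}^{\infty}$, the paper via the additive perturbation $\Delta_{z_i}(w,z^n)=P_{w|z_i,z^{-i}}-P_{w|z^n}$), expand the logarithm to first order with a uniformly controlled quadratic remainder, and observe that the two surviving first-order terms --- one weighted by the Gibbs density, one unweighted --- converge to the same limit by the strong law of large numbers and dominated convergence because that weight tends to a constant. Your exact cancellation of the deterministic factor $e^{-\gamma(\ell(w,z_i)-L_\mu(w))/n}$ inside the Jensen gap is a pleasant bookkeeping simplification, but the substance of the argument, including the role of the bounded-loss hypothesis in securing the uniform bounds needed for domination, matches the paper's.
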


This result shows that despite $J_i^{[n]}(w,z_i)$ is a function of both $w$ and $z_i$, the influence of $z_i$ is relatively negligible when $n$ goes to infinity.

\begin{lemma}\label{:Lem5}
If the loss function $\ell(w,z)$ is bounded, the $\hat{J}^{[n]}(w)$ introduced in Lemma~\ref{:Lem4} satisfying $n\cdot\hat{J}^{[n]}(w)$ is uniformly bounded. Furthermore, $\lim_{n\to\infty}n\cdot\hat{J}^{[n]}(w)$ exists.
\end{lemma}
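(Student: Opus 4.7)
The plan is to leverage Lemma~\ref{:Lem4} to reduce both claims to analogous statements about $n \cdot J_i^{[n]}(w, z_i)$ for a single (arbitrary but fixed) $z_i$, and then to establish uniform boundedness and pointwise convergence of the latter via a second-order Taylor expansion of the Jensen gap. Concretely, Lemma~\ref{:Lem4} gives $n\hat{J}^{[n]}(w) = nJ_i^{[n]}(w, z_i) + n(\hat{J}^{[n]}(w) - J_i^{[n]}(w, z_i))$, where the last summand is uniformly bounded and converges to $0$. Hence it suffices to prove that $n\,J_i^{[n]}(w,z_i)$ is uniformly bounded and admits a limit as $n\to\infty$.

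For a fixed $w$ and $z_i$, I would write $f_n(z^{-i}) \triangleq P^{[n]}_{W|S}(w|z_i,z^{-i})$ and decompose $\log f_n = \log\pi(w) - \gamma L_e(w,s) - \log V_{L_e}(s,\gamma)$, where the random dependence on $z^{-i}$ enters through the empirical average $\frac{1}{n}\sum_{j\neq i}\ell(w,z_j)$ and through $V_{L_e}(s,\gamma) = V_{L_\mu}(\gamma)\cdot\mathbb{E}_{W'}^{\infty}[e^{-\gamma(L_e(W',s)-L_\mu(W'))}]$. Under the bounded-loss assumption, $\log f_n$ is uniformly bounded, and a straightforward Bernstein-type bound gives $\log\mathbb{E}[e^{X_n}] - \mathbb{E}[X_n] = O(\mathrm{Var}(X_n))$ with $X_n = \log f_n - \mathbb{E}[\log f_n]$; since each summand in $L_e$ is bounded and the $Z_j$'s are i.i.d., $\mathrm{Var}(X_n) = O(1/n)$, which yields the uniform boundedness of $n\,J_i^{[n]}(w,z_i)$.

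For the existence of the limit, I would carry out a second-order expansion $J_i^{[n]}(w,z_i) = \tfrac{1}{2}\mathrm{Var}_{Z^{-i}}(\log f_n) + R_n$, where $R_n$ is a cubic remainder controlled by $\mathbb{E}|X_n|^3 = O(n^{-3/2})$, so $nR_n \to 0$. The dominant term $\tfrac{n}{2}\mathrm{Var}_{Z^{-i}}(\log f_n)$ splits into contributions from $\gamma L_e$ and from $\log V_{L_e}$. The first contribution is directly a variance of $n-1$ i.i.d.\ variables $\ell(w,Z_j)-L_\mu(w)$ scaled by $\gamma^2/n^2$, whose product with $n$ converges by the strong law of large numbers to $\gamma^2\mathrm{Var}_{Z\sim\mu}(\ell(w,Z))$. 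The second contribution requires a nested first-order expansion of $\log V_{L_e}(s,\gamma)/V_{L_\mu}(\gamma)$ in powers of $L_e - L_\mu$, giving another i.i.d.\ average in $\phi(Z_j) \triangleq \mathbb{E}_{W'}^{\infty}[\ell(W',Z_j) - L_\mu(W')]$; convergence of the cross and $\phi$-variance terms follows analogously by SLLN and dominated convergence, using the uniform boundedness of all relevant integrands established via Corollary~\ref{:Cor1}.

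The main obstacle is the \emph{double} Taylor expansion: the Jensen gap itself requires a second-order expansion of $\log$, and the partition function $V_{L_e}(s,\gamma)$ inside already hides a $\log\mathbb{E}[e^{\cdot}]$ that must be expanded once more. Both expansions must be justified uniformly in $n$ so that remainders are $o(1/n)$ and the variance computation can be exchanged with the limit via DCT. The boundedness of the loss, together with the stability estimates already implicit in Lemma~\ref{:Lem3}, provides the uniform envelopes needed to justify both expansions and the swap of limit and expectation; once that is in place, combining with the reduction from Lemma~\ref{:Lem4} completes the proof of both assertions.
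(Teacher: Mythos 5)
Your proposal is correct and lands on the same limit as the paper, but it is packaged differently enough to be worth contrasting. The paper works directly with $\hat{J}^{[n]}(w)$: it Taylor-expands the density ratio $P_{w|z^n}/P_{w}^{\infty}$ in powers of $\delta(w,z^n)=L_e(w,z^n)-L_\mu(w)$ with explicit Lagrange remainders, sandwiches $\log(1+x)$ between second-, third- and fourth-order polynomials, and then controls every resulting term through explicit moment identities for $\sqrt{n}\,\delta$ (the computations in \eqref{ieq:2mmt}--\eqref{ieq:4mmt} and Annotation~\ref{:Rmk2}, where independence kills the cross terms). You instead reduce to $n J_i^{[n]}(w,z_i)$ via Lemma~\ref{:Lem4} (legitimate and non-circular, since that lemma is proved without Lemma~\ref{:Lem5}) and recast the Jensen gap as a cumulant expansion $\log\mathbb{E}[e^{X_n}]$ of the centered log-density; a Bernstein/Efron--Stein-type bound then gives uniform boundedness in one line, which is genuinely slicker than the paper's term-by-term Cauchy--Schwarz estimates. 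For the existence of the limit the two routes converge: your leading term $\tfrac{n}{2}\mathrm{Var}(X_n)$ with $X_n \approx -\gamma\delta(w,\cdot)+\gamma\mathbb{E}_{W}^{\infty}[\delta(W,\cdot)]$ reproduces exactly the paper's limit $\tfrac{\gamma^2}{2}\mathrm{Var}_\mu\big(\ell(w,Z)-\mathbb{E}_{W}^{\infty}[\ell(W,Z)-L_\mu(W)]\big)$. The one place your sketch leans on an unproved step is the remainder control: you need $\mathbb{E}|X_n|^3 = O(n^{-3/2})$ (and uniformity in $w$) for both the empirical-risk part and the nonlinear $\log V_{L_e}$ part, which is precisely the content of the paper's Annotation~\ref{:Rmk2} (Rosenthal-type moment bounds for normalized i.i.d.\ sums); a bounded-differences/McDiarmid argument would give you this uniformly for the full functional $X_n$ without expanding $\log V_{L_e}$ first, and would be worth stating explicitly to close the "main obstacle" you flag.
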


Equipped with these technical lemmas, we present the main theorem of the paper, which shows that the gap between $I_{\mathrm{SKL}}(W;S)$ and the sum of $I_{\mathrm{SKL}}(W;Z_i)$  converges to zero faster than $\frac{1}{n}$. 

\begin{theorem}\label{:Thm4}
If the loss function $\ell(w,z)$ is bounded, we have
\begin{equation}\label{equ:Th4 IsklWZ-WS}
    \sum_{i=1}^nI_{\mathrm{SKL}}(W;Z_i)-I_{\mathrm{SKL}}(W;S)= o\Big(\frac{1}{n}\Big).
\end{equation}
\end{theorem}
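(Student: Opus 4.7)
The plan is to start from the exact expression for the gap given by Theorem~\ref{:thm1} and exploit the asymptotic refinements provided by Lemmas~\ref{:Lem3} and~\ref{:Lem4}. The key idea is that both the Jensen-gap approximation error (Lemma~\ref{:Lem4}) and the deviation of $P_{W,Z_i}^{[n]}$ from $P_W^{[n]}\otimes P_{Z_i}$ (Lemma~\ref{:Lem3}) are of order $1/n$, and when they are multiplied together inside a single integral, the resulting summand will be of order $o(1/n^2)$. Summing over the $n$ indices then yields the claimed $o(1/n)$ bound.

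The first step is a crucial cancellation. Decompose $J_i^{[n]}(w,z_i)=\hat{J}^{[n]}(w)+G_i^{[n]}(w,z_i)$ with $G_i^{[n]}\triangleq J_i^{[n]}-\hat{J}^{[n]}$ as in Lemma~\ref{:Lem4}. Because $\hat{J}^{[n]}(W)$ depends only on $W$, and both $P_{W,Z_i}^{[n]}$ and $P_W^{[n]}\otimes P_{Z_i}$ share the same $W$-marginal $P_W^{[n]}$, the contribution of $\hat{J}^{[n]}$ to the right-hand side of~\eqref{equ:Th1 gap} vanishes. The remaining difference can be expressed as a single integral against $P_{W,Z_i}^{[n]}$,
\begin{equation*}
\sum_{i=1}^n\mathbb{E}_{P_{W,Z_i}^{[n]}}\!\left[G_i^{[n]}(W,Z_i)\!\left(1-\frac{dP_W^{[n]}\otimes P_{Z_i}}{dP_{W,Z_i}^{[n]}}\right)\right],
\end{equation*}
so that each summand now involves the product of two quantities that both shrink at rate $1/n$.

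To show each summand is $o(1/n^2)$ rather than merely $O(1/n^2)$, I would multiply the integrand by $n^2$ and study the limit. By Lemma~\ref{:Lem4}, $nG_i^{[n]}(w,z_i)$ is uniformly bounded and converges almost surely to zero, while by Lemma~\ref{:Lem3}, $n(1-dP_W^{[n]}\otimes P_{Z_i}/dP_{W,Z_i}^{[n]})$ is uniformly bounded. After rewriting the expectation against a fixed product reference measure using the uniformly bounded density from Corollary~\ref{:Cor1}, the dominated convergence theorem allows us to pass to the limit, and the product of a null sequence with a bounded sequence vanishes. The main obstacle is precisely this interchange of limit and integral: absent the refined $o(1/n)$ pointwise statement in Lemma~\ref{:Lem4} (as opposed to just a uniform $O(1/n)$ bound), the argument would only produce $O(1/n)$ for the entire sum, which is the same scale as $I_{\mathrm{SKL}}(W;S)$ itself and would be insufficient to establish the claimed asymptotic equivalence.
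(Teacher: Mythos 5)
Your proposal is correct, and its skeleton --- Theorem~1, then Lemmas~3 and~4, then dominated convergence applied to $n^2$ times a single summand --- matches the paper's. The genuine difference is in how the final limit is shown to vanish. The paper keeps the full Jensen gap $J_i^{[n]}$, whose rescaled limit $K_0(w)=\lim_{n\to\infty} n\hat J^{[n]}(w)$ is generally nonzero; it therefore needs Lemma~5 to guarantee that this limit exists and that $n J_i^{[n]}$ is uniformly bounded, and it concludes only at the last step by observing that the limit of $n\bigl(1-dP_W^{[n]}\otimes P_{Z_i}/dP_{W,Z_i}^{[n]}\bigr)$ from Lemma~3, namely $-\gamma(\ell(w,z_i)-L_\mu(w))+\mathbb{E}_W^\infty[\gamma(\ell(W,z_i)-L_\mu(W))]$, has zero mean under $z_i\sim\mu$ for every fixed $w$, so that its product with the $w$-only function $K_0(w)$ integrates to zero. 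You instead cancel the $w$-only part $\hat J^{[n]}(W)$ exactly at the outset, since $P_{W,Z_i}^{[n]}$ and $P_W^{[n]}\otimes P_{Z_i}$ share the marginal $P_W^{[n]}$; after that, the rescaled integrand $nG_i^{[n]}\cdot n\bigl(1-dP_W^{[n]}\otimes P_{Z_i}/dP_{W,Z_i}^{[n]}\bigr)$ converges pointwise to zero, so no mean-zero computation is needed and Lemma~5 is bypassed almost entirely (you only need $\hat J^{[n]}$ to be integrable to justify the cancellation, which is immediate for bounded loss). Your route is slightly cleaner and makes transparent that the $z_i$-dependence of the Jensen gap is the sole source of the discrepancy, while the paper's route has the side benefit of exhibiting the explicit limit $K_0(w)$. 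Your closing remark is also accurate: with only a uniform $O(1/n)$ bound on $G_i^{[n]}$ the sum would be $O(1/n)$, the same order as $I_{\mathrm{SKL}}(W;S)$ itself, so the pointwise null limit in Lemma~4 (or, in the paper, the mean-zero integration) is precisely what upgrades $O$ to $o$.
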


It is worth pointing out that the key for proving Theorem~\ref{:Thm4} is Lemma~\ref{:Lem4}, which indicates that the effect of terms involving variable $z_i$ is order-wise small compared to the remaining terms. Utilizing this result, we apply the dominated convergence theorem on  $n \cdot \big(\sum_{i=1}^nI_{\mathrm{SKL}}(W;Z_i)-I_{\mathrm{SKL}}(W;S)\big)$. More proof details can be found in Appendix~\ref{appendixB}.


From Theorem~\ref{:Thm4}, we can immediately obtain the following corollary. 
\begin{corollary}\label{:Cor3}
If the loss function $\ell(w,z)$ is bounded, we have
    \begin{equation}
        I_{\mathrm{SKL}}(W;S)= \Theta\bigg(\frac{1}{n}\bigg).
    \end{equation}
    More specifically,
    \begin{equation}
        I_{\mathrm{SKL}}(W;S)\sim \sum_{i=1}^n  I_{\mathrm{SKL}}(W;Z_i).
    \end{equation}
\end{corollary}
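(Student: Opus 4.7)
My plan is to deduce both assertions as immediate asymptotic consequences of Theorem~\ref{:Thm2} and Theorem~\ref{:Thm4}; no new analytic machinery is needed beyond combining the leading-order estimates already established.

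First, I would exploit the i.i.d.\ structure of $S$: by symmetry each term $I_{\mathrm{SKL}}(W;Z_i)$ is equal to $I_{\mathrm{SKL}}(W;Z_1)$, so applying Theorem~\ref{:Thm2} term by term yields
\begin{equation}
    \sum_{i=1}^n I_{\mathrm{SKL}}(W;Z_i) \;=\; n\cdot I_{\mathrm{SKL}}(W;Z_1) \;\sim\; \frac{\gamma^2 V}{n},
\end{equation}
where $V\ge 0$ denotes the variance constant appearing on the right-hand side of~\eqref{sim:Iskl}. As observed after Theorem~\ref{:Thm2}, $V$ is strictly positive whenever $\ell(w,\cdot)$ is non-constant for $P_{W}^{\infty}$-a.s.\ $w$, so in this non-degenerate regime $\sum_{i=1}^n I_{\mathrm{SKL}}(W;Z_i) = \Theta(1/n)$. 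Invoking Theorem~\ref{:Thm4} then gives
\begin{equation}
    I_{\mathrm{SKL}}(W;S) \;=\; \sum_{i=1}^n I_{\mathrm{SKL}}(W;Z_i) \;-\; o\!\left(\tfrac{1}{n}\right) \;=\; \Theta\!\left(\tfrac{1}{n}\right),
\end{equation}
which establishes the first claim.

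For the asymptotic equivalence $I_{\mathrm{SKL}}(W;S)\sim\sum_{i=1}^n I_{\mathrm{SKL}}(W;Z_i)$, I would form the ratio
\begin{equation}
    \frac{I_{\mathrm{SKL}}(W;S)}{\sum_{i=1}^n I_{\mathrm{SKL}}(W;Z_i)} \;=\; 1 \;-\; \frac{\sum_{i=1}^n I_{\mathrm{SKL}}(W;Z_i) \,-\, I_{\mathrm{SKL}}(W;S)}{\sum_{i=1}^n I_{\mathrm{SKL}}(W;Z_i)},
\end{equation}
and note that by Theorem~\ref{:Thm4} the numerator is $o(1/n)$ while the denominator is $\Theta(1/n)$, so the subtracted fraction is $o(1)$ and the ratio tends to $1$. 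I do not anticipate any real technical obstacle; the only subtlety is the degenerate boundary case $V=0$ (i.e., $\ell(w,\cdot)$ is $P_{W}^{\infty}$-a.s.\ constant in its second argument), where the $\Theta(1/n)$ statement becomes vacuous and the equivalence must be interpreted with mild care since both sides decay at a faster order. Outside of that edge case, the corollary follows by direct asymptotic arithmetic from the two theorems.
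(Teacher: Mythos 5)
Your proof is correct and takes essentially the same route as the paper, which derives this corollary ``immediately'' from Theorem~\ref{:Thm4} combined with the per-sample rate $I_{\mathrm{SKL}}(W;Z_i)\sim \gamma^2 V/n^2$ from Theorem~\ref{:Thm2}, exactly as you do. Your caveat about the degenerate case $V=0$ (where the Gibbs posterior ceases to depend on the data and $I_{\mathrm{SKL}}(W;S)$ vanishes, so the $\Theta(1/n)$ claim needs the non-degeneracy assumption) is a legitimate point that the paper only acknowledges implicitly in the remark following Theorem~\ref{:Thm2}.
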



The result in Corollary~\ref{:Cor3} aligns with the conclusion in~\cite{raginsky2016information}, which states that for a Gibbs algorithm, when the loss function $\ell\in[0,1]$,
\begin{equation}\label{equ: example bound}
    |{\rm{gen}}(P_{W|S}^{[n]},P_S)|\leq \frac{\gamma}{2n}.
\end{equation}

\begin{remark}
As a sanity check, we look at a simple coin-tossing example, where $w\in\{0,1\}$ and $z\in\{0,1\}$,  $\ell(w,z) = \mathds{1}_{w=z}$ is a zero-one loss, and $\pi(w)$ is uniform over $\{0,1\}$. 
From Corollary~\ref{:Cor3}, the convergence behavior of $I_{\mathrm{SKL}}(W;S)$ and thus ${\rm{gen}}(P_{W|S}^{[n]},P_S)$ can be calculated as
\begin{equation}
    \lim_{n\to\infty}n\cdot {\rm{gen}}(P_{W|S}^{[n]},P_S)=\frac{\gamma}{4},
\end{equation}
which indicates that the $\gamma/2n$ bound  in \eqref{equ: example bound} is not tight.
\end{remark}

From Corollary~\ref{:Cor3}, it is easy to see that $I(W;S)\leq I_{\mathrm{SKL}}(W;S)=\Theta\big(\frac{1}{n}\big)$, indicating $I(W;S)=O\big(\frac{1}{n}\big)$. With the same argument, the lautum information also satisfies that $L(W;S)=O\big(\frac{1}{n}\big)$. However, it is not clear which quantity contributes more to the generalization error of the Gibbs algorithm. The following theorem answers the question by proving that the two information measures equal each other asymptotically.
\begin{theorem}\label{:Thm5}
If the loss function $\ell(w,z)$ is bounded, we have
    \begin{equation*}
        \lim_{n\to\infty}n\cdot I(W;S)=\lim_{n\to\infty}n\cdot L(W;S)=\frac{1}{2}\lim_{n\to\infty}n\cdot I_{\mathrm{SKL}}(W;S).
    \end{equation*}
     In other words, $I(W;S)\sim L(W;S)=\Theta\big(\frac{1}{n}\big)$.
\end{theorem}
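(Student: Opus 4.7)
The plan is to leverage the identity $I_{\mathrm{SKL}}(W;S) = I(W;S) + L(W;S)$. Since $\lim_{n\to\infty} n\cdot I_{\mathrm{SKL}}(W;S)$ exists and is finite (combining Corollary~\ref{:Cor3} with Theorem~\ref{:Thm2}, which together pin down the constant), it suffices to prove the single statement $\lim_{n\to\infty} n\cdot (I(W;S) - L(W;S)) = 0$. Solving this alongside the known limit for $I+L$ then yields all three equalities at once.

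To handle the signed difference, I would introduce the Radon--Nikodym derivative $R_n \triangleq dP_{W,S}^{[n]}/d(P_W^{[n]}\otimes P_S)$, which by an extension of Corollary~\ref{:Cor1} to the joint case is uniformly bounded, converges to $1$ almost surely, and satisfies $\mathbb{E}_{P_W^{[n]}\otimes P_S}[R_n] = 1$. A short rearrangement of the definitions of $I$ and $L$ gives
\begin{equation*}
I(W;S) - L(W;S) = \mathbb{E}_{P_W^{[n]}\otimes P_S}\bigl[(R_n+1)\log R_n\bigr].
\end{equation*}
Taylor-expanding $(2+r)\log(1+r)$ at $r = R_n - 1 = 0$ exposes a key cancellation:
\begin{equation*}
(R_n+1)\log R_n = 2(R_n-1) + \tfrac{1}{6}(R_n-1)^3 + O\bigl((R_n-1)^4\bigr),
\end{equation*}
the quadratic coefficient vanishing by direct computation. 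Using $\mathbb{E}[R_n - 1] = 0$ under the product measure, the task reduces to bounding $\mathbb{E}[(R_n-1)^3]$ and $\mathbb{E}[(R_n-1)^4]$ by $o(1/n)$.

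This moment control is the main obstacle, since the naive estimate $|(R_n-1)^k| \le \|R_n - 1\|_\infty^{\,k-2}(R_n-1)^2$ only yields $O(1/n)$, matching the scale of $I_{\mathrm{SKL}}(W;S)$ itself and therefore failing. To break through, I would exploit the explicit Gibbs structure $R_n(w,s) = h(w,s)/\mathbb{E}_{S'}[h(w,S')]$ with $h(w,s) = e^{-\gamma L_e(w,s)}/V_{L_e}(s,\gamma)$. Taylor-expanding the exponentials around the population limit $L_\mu(w)$ shows that the leading contribution to $R_n - 1$ is $-\gamma$ times the centered empirical fluctuation $L_e(w,s) - L_\mu(w) - \mathbb{E}_{W}^{\infty}[L_e(W,s) - L_\mu(W)]$. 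Since this is a normalized sum of $n$ i.i.d.\ bounded zero-mean quantities, its third and fourth central moments are of order $1/n^2$, and the remaining lower-order corrections to $R_n-1$ contribute only $o(1/n)$ to $\mathbb{E}[(R_n-1)^3]$ and $\mathbb{E}[(R_n-1)^4]$, which completes the proof.

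A cleaner parallel route, mirroring the strategy of Theorem~\ref{:Thm4}, applies the same Taylor identity to each pair $(W,Z_i)$: here Lemma~\ref{:Lem3} gives the much stronger pointwise bound $|R_{n,i}-1|=O(1/n)$, immediately yielding $I(W;Z_i) - L(W;Z_i) = O(1/n^3)$ and hence $\sum_i (I(W;Z_i) - L(W;Z_i)) = o(1/n)$. Combined with Theorem~\ref{:Thm4}, this shows $\sum_i I(W;Z_i)\sim \sum_i L(W;Z_i)\sim \tfrac{1}{2} I_{\mathrm{SKL}}(W;S)$. Transferring this to the joint quantities would then require controlling the extra terms $\sum_i I(Z_i; Z_{1},\dots,Z_{i-1}\mid W)$ in the chain-rule identity $I(W;S) = \sum_i I(W;Z_i) + \sum_i I(Z_i;Z_{1},\dots,Z_{i-1}\mid W)$, which for the Gibbs algorithm arise purely from the $\Theta(1/\sqrt{n})$ fluctuations of $V_{L_e}(S,\gamma)$ around $V_{L_\mu}(\gamma)$ and can likewise be shown to sum to $o(1/n)$.
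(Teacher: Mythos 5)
Your main route is correct and is a genuinely different organization of the argument from the paper's. The paper computes $\lim_{n\to\infty} n\cdot L(W;S)$ head-on: it sets $x = 1 - P_{w|z^n}/P^{\infty}_{w}$, sandwiches $-\log(1-x)$ between its cubic and quartic Taylor polynomials, shows all terms of order $\ge 3$ vanish via the uniform moment bounds on $\sqrt{n}\,\delta(w,Z^n)$ (Annotation~\ref{:Rmk2}), and then explicitly evaluates the surviving second-order integral — using independence of the samples to kill the cross terms — so that the limit can be matched against the constant $\gamma^2\mathbb{E}_\mu[\mathrm{Var}]$ from Theorem~\ref{:Thm2} and Theorem~\ref{:Thm4}. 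You instead target the signed difference $I-L=\mathbb{E}_{P_W\otimes P_S}[(R_n+1)\log R_n]$, where the quadratic term of the expansion cancels identically; combined with $\mathbb{E}[R_n-1]=0$ under the product measure, this reduces the whole theorem to showing $\mathbb{E}[(R_n-1)^3]$ and $\mathbb{E}[(R_n-1)^4]$ are $o(1/n)$, which follows from exactly the same moment machinery (third/fourth central moments of a centered empirical average of bounded i.i.d.\ terms are $O(1/n^2)$). What each buys: your route gets the factor $\tfrac12$ "for free" from the algebraic cancellation and never needs to compute the limiting constant of $L$, at the price of leaning on Corollary~\ref{:Cor3} and Theorem~\ref{:Thm2} for the existence of $\lim n\cdot I_{\mathrm{SKL}}(W;S)$; the paper's route re-derives the constant independently, which doubles as a consistency check. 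Your second, chain-rule route is only a sketch — the claim that $\sum_i I(Z_i;Z_1,\dots,Z_{i-1}\mid W)=o(1/n)$ is asserted without proof and is the entire difficulty there — but the first route stands on its own, modulo writing out the routine verification that the subleading corrections to $R_n-1$ (which are $O(\delta^2)+O(1/n)$ with all moments controlled) indeed contribute only $O(1/n^2)$ to the third and fourth moments.
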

The proof technique of Theorem~\ref{:Thm5} differs from those used in Lemma~\ref{:Lem3} and Theorem~\ref{:Thm2}. Here, our idea is to sandwich the target quantity between an upper bound and a lower bound that differ only in the third or higher-order terms.
We then prove that the two bounds converge to the same value, characterized by the second-order term. Instead of using the dominated convergence theorem as in Lemma~\ref{:Lem3} and Theorem~\ref{:Thm2}, we directly analyze the integral of the second-order terms for any $n$ before taking the limit. In this process, the independence of the samples plays a crucial role, ensuring that all interaction terms are zero.



Using Theorem~\ref{:Thm5}, we can provide an alternative proof for Corollary~\ref{:Cor2.5} by applying the Proposition 2 of~\cite{bu2020tightening}, i.e.,
\begin{equation}
    I(W;S)\geq \sum_{i=1}^nI(W;Z_i).
\end{equation}


We provide the following result to showcase Theorem~\ref{:Thm5},  which tightens the existing generalization error bound for the Gibbs algorithm.
\begin{theorem}
    For Gibbs algorithm with bounded loss function $\ell(w,z)\in [a,b]$, $\forall \delta>0$, there exist an $N\in \mathbb{N}^+$ such that $\forall n>N$,
    \begin{equation}
        0\leq {\rm gen}(P_{W|S}^{[n]},P_S)\leq \frac{(b-a)^2\gamma}{(4-\delta)n}.
    \end{equation}
\end{theorem}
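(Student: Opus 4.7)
The plan is to convert the statement into an asymptotic computation, chaining together the exact identity $\mathrm{gen}(P_{W|S}^{[n]},P_S) = I_{\mathrm{SKL}}(W;S)/\gamma$, Corollary~\ref{:Cor3}, and Theorem~\ref{:Thm2} to obtain the limiting constant, then bounding that constant via Popoviciu's inequality, and finally translating the limit into the claimed $(4-\delta)$ form. The lower bound $0 \le \mathrm{gen}(P_{W|S}^{[n]},P_S)$ is immediate from the identity together with $I_{\mathrm{SKL}}(W;S) = D(P_{W,S}\Vert P_W\otimes P_S) + D(P_W\otimes P_S\Vert P_{W,S}) \ge 0$ and $\gamma > 0$.

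For the upper bound, I first exploit the i.i.d. structure: every $I_{\mathrm{SKL}}(W;Z_i)$ has the same asymptotic form given by Theorem~\ref{:Thm2}. Summing the $n$ copies and applying Corollary~\ref{:Cor3} gives
\begin{equation*}
    \lim_{n\to\infty} n \cdot I_{\mathrm{SKL}}(W;S) = \gamma^{2} V,
\end{equation*}
where $V \triangleq \mathbb{E}_{\mu}\bigl[\mathbb{E}_{W}^{\infty}[(\ell(W,Z)-L_{\mu}(W))^{2}] - \mathbb{E}_{W}^{\infty}[\ell(W,Z)-L_{\mu}(W)]^{2}\bigr]$. Dividing by $\gamma$ yields $n \cdot \mathrm{gen}(P_{W|S}^{[n]},P_S) \to \gamma V$.

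The remaining content is to show $V \le (b-a)^{2}/4$. Dropping the non-negative subtracted squared-mean term and swapping the order of integration (permissible because $W \sim P_{W}^{\infty}$ and $Z \sim \mu$ are independent in the definition of $V$) gives $V \le \mathbb{E}_{W}^{\infty}\bigl[\mathbb{E}_{\mu}[(\ell(W,Z)-L_{\mu}(W))^{2}]\bigr] = \mathbb{E}_{W}^{\infty}[\mathrm{Var}_{\mu}(\ell(W,Z))]$, where the last equality uses $\mathbb{E}_{\mu}[\ell(w,Z)] = L_{\mu}(w)$ for every $w$. For each fixed $w$, $\ell(w,\cdot)$ takes values in $[a,b]$, so Popoviciu's inequality yields $\mathrm{Var}_{\mu}(\ell(w,Z)) \le (b-a)^{2}/4$, and hence $V \le (b-a)^{2}/4$.

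Since $\gamma V \le \gamma(b-a)^{2}/4 < \gamma(b-a)^{2}/(4-\delta)$ for any $\delta > 0$, the definition of the limit produces an $N$ such that $n \cdot \mathrm{gen}(P_{W|S}^{[n]},P_S) \le \gamma(b-a)^{2}/(4-\delta)$ for all $n > N$, which is the desired bound. No individual step is technically demanding; the main subtlety is verifying that composing Theorem~\ref{:Thm2} (a per-sample $\sim$) with Corollary~\ref{:Cor3} (an $o(1/n)$ gap from Theorem~\ref{:Thm4}) does not inflate the leading constant, i.e., that the sum of $n$ error terms is genuinely $o(1/n)$. Once that is secured, the result is essentially Popoviciu applied to the limiting variance $V$.
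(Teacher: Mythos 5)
Your proof is correct, and both the lower and upper bounds are justified, but your route is genuinely different from the paper's. The paper treats this theorem as a corollary of Theorem~\ref{:Thm5}: it imports the non-asymptotic bound $0\leq {\rm gen}\leq \frac{2\sigma^2\gamma}{(1+C_E)n}$ of \cite{aminian2021exact} (restated as Lemma~\ref{:Lem6}), valid whenever $C_E\leq L(W;S)/I(W;S)$ and the loss is $\sigma$-sub-Gaussian, and then uses Theorem~\ref{:Thm5} to conclude $L(W;S)/I(W;S)\to 1$, so that for $n>N$ one may take $C_E=1-\delta/2$ and $\sigma=(b-a)/2$ (Hoeffding's lemma), which yields exactly $\frac{(b-a)^2\gamma}{(4-\delta)n}$. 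You instead bypass Theorem~\ref{:Thm5} and the sub-Gaussian machinery entirely: you identify the exact limit $\lim_{n\to\infty} n\cdot{\rm gen}(P^{[n]}_{W|S},P_S)=\gamma V$ from Theorem~\ref{:Thm2} together with Theorem~\ref{:Thm4} (equivalently Corollary~\ref{:Cor3}), bound $V\leq(b-a)^2/4$ by Popoviciu, and invoke the definition of the limit. The constant $(b-a)^2/4$ enters both arguments through essentially the same variance bound for an $[a,b]$-valued random variable---Popoviciu in yours, its moment-generating-function analogue (the Hoeffding sub-Gaussian parameter) in the paper's---so the two bounds are of identical strength; yours is more self-contained and additionally exhibits the sharper limiting constant $\gamma V$, while the paper's serves its stated purpose of showcasing Theorem~\ref{:Thm5} by routing the argument through the ratio $L(W;S)/I(W;S)$. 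One remark: the ``main subtlety'' you flag at the end is not actually an issue, since Theorem~\ref{:Thm4} controls the gap between $I_{\mathrm{SKL}}(W;S)$ and the full sum $\sum_{i=1}^n I_{\mathrm{SKL}}(W;Z_i)$ directly (it is not assembled from $n$ per-sample error terms), and by exchangeability that sum equals $n\,I_{\mathrm{SKL}}(W;Z_1)$, whose limit after multiplication by $n$ is exactly what the proof of Theorem~\ref{:Thm2} supplies.
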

This theorem provides a tighter bound compared with \eqref{equ: example bound}. Revisiting the coin-tossing example, it is interesting to see that this bound is asymptotically tight in this circumstance.

\subsection{Comparison with Asymptotics
of Model Capacity}

We would like to compare our result with the asymptotic model capacity studied in~\cite{rissanen1984universal,clarke1990information}. Different from our setting, they considered data $Y^n\in{\mathcal Y}^n$ that are drawn i.i.d from $P_{Y|X}(y|x)$, where $X\in{\mathcal X} \subset \mathbb{R}$ denotes the model parameter from certain model family. For such a Bayesian setting, the model parameter $X$ is modeled using a prior distribution $P_X(x)$. If the model $P_{Y|X}$ is sufficiently smooth in $X$, then
\begin{align}
    I(X;Y^n) &= \frac{1}{2}\log\frac{n}{2\pi e}-D(P_X\Vert P_X^*)-\log\int_{\mathcal X}\sqrt{J(x')}dx' \nn\\
    &\qquad + o(1),
\end{align}
where
\begin{equation}   J(x)=\mathbb{E}_{P_{Y|X=x}}\bigg[\bigg(\frac{\partial}{\partial x}\log P_{Y|X}(Y|x)\bigg)^2\bigg],
\end{equation}
and $P_X^*$ denotes the least informative prior, i.e., Jeffery's prior~\cite{clarke1994jeffreys}. The mutual information is maximized when $P_X = P_X^*$, i.e., $P_X^*$ is the capacity achieving distribution. 
We can see the growing rate of mutual information is $I(X;Y^n) = O(\log n)$, which is different from the asymptotic result $I(W;S) = \Theta(\frac{1}{n})$ in Theorem~\ref{:Thm5}.

The difference between our setting and the model capacity setting is two-fold: 1) we considered i.i.d samples from data distribution $\mu$, while the model capacity setting considers samples conditionally independent generated from $P_{Y|X}$; 2) In our setting, the channel $P_{W|Z^n}$ is the  Gibbs algorithm defined using a bounded loss function $\ell(w,z)$, so that conditional on $W$, the samples are not independent to each other anymore. However, the posterior $P_{X|Y^n}$ in the model capacity setting is induced by the prior $P_{X}$ and the likelihood $P_{Y|X}$ given the conditional independent structure among the samples. 


Note that the assumption $\ell(W,Z_i)$ being bounded is a sufficient condition for our previous results and is adopted to circumvent the technical difficulty of exchanging the order of integration and Taylor expansion as well as the order of integration and limits. 

\section{Example}\label{:Sec5}
In this section, we will consider an example beyond the bounded loss function. It can be shown that most of our conclusions still hold under this setting. A more detailed elaboration of the example can be found in Appendix~\ref{appendixC}.

\textbf{Estimating the mean.} 
Let $S=\{Z_i\}_{i=1}^n$ be the training set, where $Z_i$ is a $d$ dimensional vector sampled i.i.d. from $\mu = \mathcal{N}(0_d,(\frac{1}{\sqrt{2\beta}})^2 \mI_d)$. We consider the problem of learning the means of the distribution $\mu$. For simplicity, we consider $d=1$. We adopt square error $\ell(w,Z) \triangleq \Vert w-Z\Vert_2$ as the loss function and choose our prior distribution to be $\pi(w)=\frac{1}{\sqrt{\pi}}\exp(-w^2)$.

In this simple example, we can calculate the joint symmetrized KL information between $S$ and $W$ as
\begin{equation}\label{equ:res_IsklWS}
    \begin{aligned}
    \gamma{\rm{gen}}(P_{W|S},\mu) 
    =I_{\mathrm{SKL}}(W;S)=
    \frac{\gamma^2}{n\beta(1+\gamma)},
    \end{aligned}
\end{equation}
and the individual sample symmetrized KL information between $W$ and $Z_i$
\begin{equation}\label{equ:res_IsklWZ}
    \begin{aligned}
    I_{\mathrm{SKL}}(W;Z_i)
    &=\frac{\gamma^2}{n^2\beta(1+\gamma)+\gamma^2(n-1)}.
    \end{aligned}
\end{equation}
From \eqref{equ:res_IsklWS} and \eqref{equ:res_IsklWZ}, we get
\begin{align}\label{equ:gap_example}
\sum_{i=1}^nI_{\mathrm{SKL}}(W;Z_i)-I_{\mathrm{SKL}}(W;S) &= \Theta\bigg(\frac{1}{n^2}\bigg)  \\
&=o(I_{\mathrm{SKL}}(W;S)), \nn
\end{align}
which shows that Theorem~\ref{:Thm4} still holds, despite the fact we are not considering a bounded loss function.


We further investigate the Jensen gap term $J_i(w,z_i)$, since as stated previously, the key to prove Theorem~\ref{:Thm4} is that the effect of variable $z_i$ is order-wise smaller than that of $w$. In estimating the mean problem, we can calculate that
\begin{equation}
    \begin{aligned}
    J_i^{[n]}(w,z_i)
    =& w^2\Theta\bigg(\frac{1}{n}\bigg)+wz_1\Theta\bigg(\frac{1}{n^2}\bigg)\\&+z_1^2\Theta\bigg(\frac{1}{n^3}\bigg)+\Theta\bigg(\frac{1}{n^2}\bigg)
    \end{aligned}
\end{equation}
where all terms represented by big O notation were uniformly small. We can see that the contribution of $z_i$ term is $\Theta\big(\frac{1}{n^2}\big)$, which is indeed order-wise smaller than those terms not containing $z_i$.

Finally, we provide a similar analysis of mutual information.
\begin{equation}
    \begin{aligned}
        I(W;Z_i)&=\frac{1}{2}\log\bigg(1+\frac{\gamma^2}{n^2(1+\gamma)\beta+(n-1)\gamma^2}\bigg)\\
        &=\Theta\bigg(\frac{1}{n^2}\bigg),
    \end{aligned}
\end{equation}
and
\begin{equation}
    \begin{aligned}
        I(W;S)&=\frac{1}{2}\log\bigg(1+\frac{\gamma^2}{n\beta(1+\gamma)}\bigg)\\
        &\sim \frac{1}{2}\cdot\frac{\gamma^2}{n\beta(1+\gamma)}\\
        &=\frac{1}{2}I_{\mathrm{SKL}}(W;S),
    \end{aligned}
\end{equation}
which corresponds to our result in Theorem~\ref{:Thm5}.

\newpage



\bibliography{ref}
\bibliographystyle{ieeetr}

\onecolumn
\appendices
\section{Non-asymptotic Results}\label{appendixA}
\begin{notation} For simplicity, we use
    \begin{equation}
        \mathbb{E}_{e}^{[n]}[f(W)]\triangleq\frac{\int_{\mathcal{W}}\pi(w)e^{-\gamma \frac{1}{n}\sum_{i=1}^n\ell(w,z_i)}(f(w))dw}{\int_{\mathcal{W}}\pi(w)e^{-\gamma \frac{1}{n}\sum_{i=1}^n\ell(w,z_i)}dw}.
    \end{equation}
\end{notation}
\subsection{Proof of Theorem 1}
\begin{proof}
    First from the definition, we have
    \begin{equation}\label{equ:Th1_def2}
    \begin{aligned}
    I_{\mathrm{SKL} }(W;Z_i)&=\mathbb{E}_{P_{W,Z_i}}[\log(P_{W|Z_i})]-\mathbb{E}_{P_{W}\otimes P_{Z_i}}[\log(P_{W|Z_i})].\\
    \end{aligned}
    \end{equation}
    Using $z^{-i}$ to represent $\{z_j\}_{j\ne i}$, the inner term $\log(P_{W|Z_i})$ can be represented by:
    \begin{equation}\label{equ:Th1_eq1}
    \begin{aligned}
    \log(P_{W|Z_i})&=\log\left(\int_{\mathcal{Z}^{n-1}}P_{W|Z_i,z^{-i}}d\mu^{n-1}(z^{-i}) \right).\\
    \end{aligned}
    \end{equation}
    Notice that this holds true only when $Z_i$s are sampled independently.

    Our aim is to prove
    \begin{equation}
        \sum_{i=1}^nI_{\mathrm{SKL} }(W;Z_i)=\sum_{i=1}^n(\mathbb{E}_{P_{W,Z_i}}[J_i^{[n]}(W,Z_i)]-\mathbb{E}_{P_{W}\otimes P_{Z_i}}[J_i^{[n]}(W,Z_i)])+I_{\mathrm{SKL} }(W;S).
    \end{equation}
    Then from \eqref{equ:Th1_def1} \eqref{equ:Th1_def2} and \eqref{equ:Th1_eq1}, the equation to prove can be rewritten as
    \begin{align}\label{eq:Th1_aim}
        I_{\mathrm{SKL} }(W;S) =&\  \sum_{i=0}^n\bigg[ \mathbb{E}_{P_{W,Z_i}}\left[\int_{\mathcal{Z}^{n-1}}\log(P_{W|Z_i,z^{-i}})d\mu^{n-1}(z^{-i})\right]-\nonumber\\&\ \mathbb{E}_{P_{W}\otimes P_{Z_i}}\left[\int_{\mathcal{Z}^{n-1}}\log(P_{W|Z_i,z^{-i}})d\mu^{n-1}(z^{-i})\right] \bigg].
    \end{align}

    Using the Gibbs posterior, we have
    \begin{equation}
        P_{W|Z_i,z^{-i}}=\frac{\pi(W)e^{-\gamma L_e(W,Z_i,z^{-i})}}{V_{L_e}(Z_i,z^{-i},\gamma)}.
    \end{equation}
    With this, we can now calculate the right hand side of equation \eqref{eq:Th1_aim}
    \begin{align}\label{equ:gap=gen}
        &\ \sum_{i=0}^n\left[ \mathbb{E}_{P_{W,Z_i}}\left[\int_{\mathcal{Z}^{n-1}}\log(P_{W|Z_i,z^{-i}})d\mu^{n-1}(z^{-i})\right]-\mathbb{E}_{P_{W}\otimes P_{Z_i}}\left[\int_{\mathcal{Z}^{n-1}}\log(P_{W|Z_i,z^{-i}})d\mu^{n-1}(z^{-i})\right] \right]\nonumber\\
        =&\ \sum_{i=0}^n\left\{ \left[\int_{\mathcal{Z}^{n-1}}\mathbb{E}_{P_{W,Z_i}}[\log(P_{W|Z_i,z^{-i}})]d\mu^{n-1}(z^{-i})\right]-\left[\int_{\mathcal{Z}^{n-1}}\mathbb{E}_{P_{W}\otimes P_{Z_i}}[\log(P_{W|Z_i,z^{-i}})]d\mu^{n-1}(z^{-i})\right] \right\}\nonumber\\
        =&\ \sum_{i=0}^n \int_{\mathcal{Z}^{n-1}}\left[\mathbb{E}_{P_{W,Z_i}}[\log(P_{W|Z_i,z^{-i}})]-\mathbb{E}_{P_{W}\otimes P_{Z_i}}[\log(P_{W|Z_i,z^{-i}})]\right]d\mu^{n-1}(z^{-i})\nonumber\\
        =&\ \sum_{i=0}^n \int_{\mathcal{Z}^{n-1}}\gamma\left[\mathbb{E}_{P_{W}\otimes P_{Z_i}}[ L_e(W,Z_i,z^{-i})]-\mathbb{E}_{P_{W,Z_i}}[L_e(W,Z_i,z^{-i})]\right]d\mu^{n-1}(z^{-i})\nonumber\\
        =&\ \sum_{i=0}^n \int_{\mathcal{Z}^{n-1}}\frac{\gamma}{n}\left[\mathbb{E}_{P_{W}\otimes P_{Z_i}}[ \ell(W,Z_i)]-\mathbb{E}_{P_{W,Z_i}}[\ell(W,Z_i)]\right]d\mu^{n-1}(z^{-i})\nonumber\\
        =&\ \sum_{i=0}^n \frac{\gamma}{n}\left[\mathbb{E}_{P_{W}\otimes P_{Z_i}}[ \ell(W,Z_i)]-\mathbb{E}_{P_{W,Z_i}}[\ell(W,Z_i)]\right]\nonumber\\
        =&\ \sum_{i=0}^n \frac{\gamma}{n}\left[\mathbb{E}_{P_{W}\otimes P_{S}}[ \ell(W,Z_i)]-\mathbb{E}_{P_{W,S}}[\ell(W,Z_i)]\right]\nonumber\\
        =&\ \gamma\left[\mathbb{E}_{P_{W}\otimes P_{S}}[ L_e(W,S)]-\mathbb{E}_{P_{W,S}}[L_e(W,S)]\right]\nonumber\\
        =&\ \gamma\cdot {\rm{gen}}(P^\gamma_{W|S},P_S).
    \end{align}

    Using the fact that for Gibbs algorithm, we have
    \begin{equation}\label{equ:I=gen}
        I_{\mathrm{SKL} }(W;S)=\gamma\cdot {\rm{gen}}(P^\gamma_{W|S},P_S).
    \end{equation}
    Combining \eqref{equ:gap=gen} and \eqref{equ:I=gen}, we get \eqref{eq:Th1_aim}, and that completes our proof.
\end{proof}

\subsection{Discussion on Jensen's gap and WCDG distribution}
For a given loss function $\ell(\theta,s)$ and a given measure of reference $P_0$, the WCDG distribution $P_{\hat{S}|\Theta=\theta}^{(P_0,\beta)}$ is a measure defined as
\begin{equation}
    \frac{dP_{\hat{S}|\Theta=\theta}^{(P_0,\beta)}}{dP_0}=\exp\bigg(\frac{1}{\beta}\ell(\theta,s)-\log\int\exp\bigg(\frac{1}{\beta}\ell(\theta,s)\bigg)dP_0(s)\bigg).
\end{equation}
Note that this $\ell(\theta,s)$ could be different from what we introduced in the preliminaries section. Therefore, each $\beta$ corresponds to a WCDG distribution, and if we assign
\begin{equation}
    \alpha = D(P_{\hat{S}|\Theta=\theta}^{(P_0,\beta)}\Vert P_0),
\end{equation}
then it can be proved that $P_{\hat{S}|\Theta=\theta}^{(P_0,\beta)}$ is the solution to the following optimization problem
\begin{equation}
    \begin{aligned}
        \max_{P\ll P_0}\int\ell(\theta,s)dP(s)\ \ 
        s.t.\ \ D(P\Vert P_0)\leq \alpha
    \end{aligned}
\end{equation}
where $\theta$, $P_0$ is fixed. In other words, $P_{\hat{S}|\Theta=\theta}^{(P_0,\beta)}$ represents the worst case distribution such that the expected loss is the highest while the distribution itself is not ``too far'' from a reference distribution $P_0$. \cite{zou2024worst} used calculus of variation with constraint to generate the result for this optimization problem by seeing it as a functional of
$dP/dP_0$, which one-to-one corresponds to a measure $P\ s.t.\ P\ll P_0$.

Under this setting, and we assign $\theta=w,z_i$, and $s=z^{-i}$, let
\begin{equation}\label{equ: WCDG loss}
    \ell(\theta,s)=L_e(w,z_i,z^{-i})-\frac{1}{\gamma}\log V_{L_e}(z_i,z^{-i},\gamma), 
\end{equation}
then we have
\begin{equation}
\mathbb{E}_{P_{W}\otimes P_{Z_i}}[J_i^{[n]}(W,Z_i)]=D(P_W\otimes P_S\Vert P_{\hat{Z}^{n-1},Z_i,W}^{(\mu^{n-1},\frac{1}{\gamma})})
\end{equation}
where $P_{\hat{Z}^{n-1},Z_i,W}^{(\mu^{n-1},\frac{1}{\gamma})}=P_{\hat{Z}^{n-1}|Z_i,W}^{(\mu^{n-1},\frac{1}{\gamma})}P_{Z_i,W}$, and $P_{\hat{Z}^{n-1}|Z_i,W}^{(\mu^{n-1},\frac{1}{\gamma})}$ indicates the worst case data generating distribution of $Z^{-i}$ that maximize the loss $\ell(\theta,s)$ given in \eqref{equ: WCDG loss}, with a fixed hypothesis $W$ and one sample $Z_i$.

This also results in an upper bound for the generalization error
\begin{align}
    \sum_{i=1}^n\bigg(I_{\mathrm{SKL} }(W;Z_i)+D(P_W\otimes P_S\Vert P_{\hat{Z}^{n-1},Z_i,W}^{(\mu^{n-1},\frac{1}{\gamma})})\bigg)&\geq I_{\mathrm{SKL} }(W;S)\\&=\gamma {\rm{gen}}(P^\gamma_{W|S},P_S).
\end{align}
Note that this bound is not order-wise optimal as it omitted another non-negative Jensen term.
\section{Asymptotic Results}\label{appendixB}
\subsection{Proof of Lemma 1}\label{:Lem1}
\begin{proof}
    It is easy to show that
    \begin{align}
        \lim_{n\to\infty}\bigg(\frac{dP_{W,Z^n}^{[n]}}{dP_{W}^{\infty}\otimes P_{Z^\infty}}\bigg)&=\lim_{n\to\infty}\bigg(\frac{dP_{W,Z^n}^{[n]}}{dP_{W}^{\infty}\otimes \mu^n}\bigg)\nonumber\\
        &=\lim_{n\to\infty}\frac{\pi(W)e^{-\gamma \frac{1}{n}\sum_{1}^n\ell(W,Z_i)}}{\int_{\mathcal{W}}\pi(w)e^{-\gamma \frac{1}{n}\sum_{1}^n\ell(w,Z_i)}dw}\cdot \frac{\int_{\mathcal{W}}\pi(w)e^{-\gamma L_\mu(w)}dw}{\pi(W)e^{-\gamma L_\mu(W)}}\nonumber\\
        &=\frac{\pi(W)\displaystyle\lim_{n\to\infty}e^{-\gamma \frac{1}{n}\sum_{1}^n\ell(W,Z_i)}}{\int_{\mathcal{W}}\pi(w)\displaystyle\lim_{n\to\infty}e^{-\gamma \frac{1}{n}\sum_{1}^n\ell(w,Z_i)}dw}\cdot \frac{\int_{\mathcal{W}}\pi(w)e^{-\gamma L_\mu(w)}dw}{\pi(W)e^{-\gamma L_\mu(W)}}\nonumber\\
        &=\frac{\pi(W)e^{-\gamma \lim_{n\to\infty}\frac{1}{n}\sum_{1}^n\ell(W,Z_i)}}{\int_{\mathcal{W}}\pi(w)e^{-\gamma \lim_{n\to\infty}\frac{1}{n}\sum_{1}^n\ell(w,Z_i)}dw}\cdot \frac{\int_{\mathcal{W}}\pi(w)e^{-\gamma L_\mu(w)}dw}{\pi(W)e^{-\gamma L_\mu(W)}}\nonumber\\
        &=\frac{\pi(W)e^{-\gamma L_\mu(W)}}{\int_{\mathcal{W}}\pi(w)e^{-\gamma L_\mu(w)}dw}\cdot \frac{\int_{\mathcal{W}}\pi(w)e^{-\gamma L_\mu(w)}dw}{\pi(W)e^{-\gamma L_\mu(W)}}\nonumber
        \\&=1\ \ a.s.
    \end{align}
    where we used the strong law of large numbers to show that for any fixed $W$,
    \begin{align}
        \lim_{n\to\infty}\frac{1}{n}\sum_{1}^n\ell(W,Z_i)=L_\mu(W)
    \end{align}
    almost surely with regard to $P_{Z^\infty}$, and thus holds almost surely in $P_{W}^{\infty}\otimes P_{Z^\infty}$ using Fubini's theorem.
    We also used the dominated convergence theorem to exchange the integral and limit.
\end{proof}
\subsection{Proof of Lemma 2}
\begin{proof}
    Using similar techniques as in the proof of the previous Lemma \ref{:Lem1}
    \begin{align}
        \varliminf_{n\to\infty}\bigg(\frac{dP_{W,Z_i}^{[n]}}{dP_{W}^{\infty}\otimes P_{Z^\infty}}\bigg)&=\varliminf_{n\to\infty}\bigg(\frac{P_{W|Z_i}^{[n]}}{P_{W}^{\infty}}\bigg)\nonumber\\
        &=\varliminf_{n\to\infty}\bigg(\int_{\mathcal{Z}^{n-1}}P_{W|Z_i,z^{-i}}^{[n]}d\mu^{n-1}(z^{-i})\bigg)\cdot\frac{d{\rm{Leb}}}{dP_{W}^{\infty}}\nonumber\\
        &\geq \bigg(\int_{\rm{Z^{\infty}}}\varliminf_{n\to\infty}P_{W|Z_i,z^{-i}}^{[n]}dP_{Z^\infty}(z^{\infty})\bigg)\cdot\frac{d{\rm{Leb}}}{dP_{W}^{\infty}}\nonumber\\
        &=\frac{\pi(W)e^{-\gamma L_\mu(W)}}{\int_{\mathcal{W}}\pi(w)e^{-\gamma L_\mu(w)}\cdot dw}\cdot \frac{\int_{\mathcal{W}}\pi(w)e^{-\gamma L_\mu(w)}\cdot dw}{\pi(W)e^{-\gamma L_\mu(W)}}\nonumber\\&=1
    \end{align}
    where the inequality is obtained using Fatou's lemma. Here, we use ${\rm Leb}$ to denote the Lebesgue measure on $\mathcal{W}\subset \mathbb{R}^d$. By saying that, we acknowledge that $\pi(w)$ is a distribution with regard to Lebesgue measure on $\mathcal{W}$. Of course, the proof still works for other measures even when Lebesgue measure doesn't exist on $\mathcal{W}$, namely, when $\mathcal{W}$ is an abstract space (for example a general function space) with a measure.
    
    From the Radon-Nikodym theorem, $dP_{W,Z_i}^{[n]}/{dP_{W}^{\infty}\otimes P_{Z^\infty}}$ is a measurable function on $\mathcal{F}_{W,Z^n}^{[n]}$, and thus measurable on $\mathcal{F}^\infty$. Consequently, $\displaystyle\varliminf_{n\to\infty}\left({dP_{W,Z_i}^{[n]}}/{dP_{W}^{\infty}\otimes P_{Z^\infty}}\right)$ is also measurable on $\mathcal{F}^\infty$. We consider the set $\Omega_\delta$ as 
    \begin{align}
    \Omega_\delta=\left\{\omega\in \mathcal{W}\times \mathcal{Z}^\infty\bigg\vert \displaystyle\varliminf_{n\to\infty}\left(\frac{dP_{W,Z_i}^{[n]}}{dP_{W}^{\infty}\otimes P_{Z^\infty}}\right)(\omega)>1\right\}, 
    \end{align}
    and the set is obviously measurable. Suppose that $P_{W}^{\infty}\otimes P_{Z^\infty}(\Omega_\delta)>0$, then
    \begin{align}
        1&=\varliminf_{n\to\infty}\int_{\Omega^\infty} \bigg(\frac{dP_{W,Z_i}^{[n]}}{dP_{W}^{\infty}\otimes P_{Z^\infty}}\bigg)dP_{W}^{\infty}\otimes P_{Z^\infty}\\
        &\geq\int_{\Omega^\infty}\varliminf_{n\to\infty}\bigg(\frac{dP_{W,Z_i}^{[n]}}{dP_{W}^{\infty}\otimes P_{Z^\infty}}\bigg)dP_{W}^{\infty}\otimes P_{Z^\infty}>1
    \end{align}
    generates contradiction, indicating that $P_{W}^{\infty}\otimes P_{Z^\infty}(\Omega_\delta)=0$, which proves the lemma.
\end{proof}
\subsection{Proof of Corollary 1}
\begin{proof}
This is a direct result of the dominated convergence theorem.
\end{proof}
\subsection{Proof of Lemma 3}
\begin{proof} First, we have
        \begin{align}\label{equ:lem3_eq1}
            n\cdot\bigg(1-\frac{dP_W^{[n]}\otimes P_{Z_i}}{dP_{W,Z_i}^{[n]}}\bigg)&=-\frac{n\cdot\int_{Z^n}(P_{W|z^n}-P_{W|Z_i,z^{-i}})d\mu^n(z^n)}{\int_{Z^n}(P_{W|Z_i,z^{-i}})d\mu^n(z^n)}\nonumber\\
            &=-\frac{\int_{Z^\infty} n(P_{W|z^n}-P_{W|Z_i,z^{-i}})dP_{Z^\infty}(z^{\infty})}{\int_{Z^\infty}(P_{W|Z_i,z^{-i}})dP_{Z^\infty}(z^{\infty})}.
        \end{align}
    For $\frac{f(t,x)}{(Af)(x)}$ where $A: g(t)\mapsto y$ is a linear functional. We expand it into
    \begin{align}\label{equ:general_expand_form}
        \frac{f(t,x)}{(Af)(x)}=&(f(t,x_0)+f_x(t,x_0)(x-x_0)+\frac{1}{2}f_{xx}(t,\xi)(x-x_0)^2)\cdot \bigg(\frac{1}{(Af(t,x_0))}\nonumber\\&-\frac{1}{(Af(t,x_0))^2}A(f_x(t,x_0)(x-x_0)+\frac{1}{2}f_{xx}(t,\xi)(x-x_0)^2)\nonumber\\&+\frac{1}{\xi_1^3}\big(A(f_x(t,x_0)(x-x_0)+\frac{1}{2}f_{xx}(t,\xi)(x-x_0)^2)\big)^2\bigg).
    \end{align}
    where we used the Lagrange remainder, and $\xi\in(x_0,x)$, $\xi_1\in(Af(t,x_0),Af(t,x))$. Applying this technique where we choose the linear operator as
    \begin{equation}
        A:f(w)\mapsto\int_{\mathcal{W}}\pi(w)f(w)dw.
    \end{equation}
    We have
            \begin{align}\label{equ:n Delta P}
                &\ n(P_{W|z^n}-P_{W|Z_i,z^{-i}})\nonumber\\
                =&\ \frac{\pi(W)\big(e^{-\gamma\frac{1}{n}\sum_{j=1}^{n}\ell(W,z_j)}-e^{-\gamma\frac{1}{n}\sum_{j=1,j\neq i}^{n}\ell(W,z_j)-\gamma\frac{1}{n}\ell(W,Z_i)}\big)}{\int_{\mathcal{W}}\pi(w)e^{-\gamma\frac{1}{n}\sum_{j=1}^{n}\ell(w,z_j)}\cdot dw-\int_{\mathcal{W}}\pi(w)\big(e^{-\gamma\frac{1}{n}\sum_{j=1}^{n}\ell(W,z_j)}-e^{-\gamma\frac{1}{n}\sum_{j=1,j\neq i}^{n}\ell(w,z_j)-\gamma\frac{1}{n}\ell(w,Z_i)}\big)dw}
                \nonumber\\=&\ \bigg\{\frac{\pi(W)e^{-\gamma\frac{1}{n}\sum_{j=1}^{n}\ell(W,z_j)}(\frac{\gamma}{n}(\ell(W,Z_i)-\ell(W,z))\cdot n)}{\int_{\mathcal{W}}\pi(w)e^{-\gamma\frac{1}{n}\sum_{j=1}^{n}\ell(w,z_j)}\cdot dw}\nonumber\\&-\frac{\pi(W)e^{-\gamma\frac{1}{n}\sum_{j=1}^{n}\ell(W,z_j)}}{[\int_{\mathcal{W}}\pi(w)e^{-\gamma\frac{1}{n}\sum_{j=1}^{n}\ell(w,z_j)}\cdot dw]^2}\int_{\mathcal{W}}\pi(w)e^{-\gamma\frac{1}{n}\sum_{j=1}^{n}\ell(w,z_j)}(\frac{\gamma}{n}(\ell(w,Z_i)-\ell(w,z))\cdot n)\cdot dw \bigg\}\nonumber\\
                &+\bigg[\frac{\pi(W)e^{-\gamma\frac{1}{n}\sum_{j=1}^{n}\ell(W,z_j)}}{\int_{\mathcal{W}}\pi(w)e^{-\gamma\frac{1}{n}\sum_{j=1}^{n}\ell(w,z_j)}\cdot dw}\bigg]R(W,Z_i,z^{-i},n^{-1})\nonumber\\
                =&\ \bigg[\frac{\pi(W)e^{-\gamma\frac{1}{n}\sum_{j=1}^{n}\ell(W,z_j)}}{\int_{\mathcal{W}}\pi(w)e^{-\gamma\frac{1}{n}\sum_{j=1}^{n}\ell(w,z_j)}\cdot dw}\bigg]\bigg(\gamma(\ell(W,Z_i)-\ell(W,z))-\mathbb{E}_{e}^{[n]}[\ell(W,Z_i)-\ell(W,z)]\nonumber\\
                &+R(W,Z_i,z^{-i},n^{-1})\bigg)\\
            \end{align}
        where $R(W,Z_i,z^{-i},n^{-1})$ is the remainder of the Taylor's expansion. It is easy to see that
        \begin{equation}
            \begin{aligned}
                R(W,Z_i,z^{-i},n^{-1})= \Theta\bigg(\frac{1}{n}\bigg).
            \end{aligned}
        \end{equation}
        Using the fact that the energy function is bounded, it is not difficult to show that
        $R(W,Z_i,z^{-i},n^{-1})$ is uniformly small. In other words, $n\cdot R(W,Z_i,z^{-i},n^{-1})$ is uniformly bounded, satisfying $\forall n\in\mathbb{N}, \forall W, \forall Z_i, \forall z^{-i}$, there exists a constant $C$, such that
        \begin{equation}\label{equ:lem3_bound_R}
            \begin{aligned}
                R(W,Z_i,z^{-i},n^{-1})\leq \frac{C}{n}.
            \end{aligned}
        \end{equation}
        Then using that the energy function is bounded, together with \eqref{equ:n Delta P} and \eqref{equ:lem3_bound_R}, we get that $\frac{n(P_{W|z^n}-P_{W|Z_i,z^{-i}})}{\pi(W)}$ is uniformly bounded, satisfying $\forall n\in\mathbb{N}, \forall W, \forall Z_i, \forall z^{-i}$, there exists a constant $C_1$, such that
        \begin{equation}\label{equ:lem3_bound1}
            \frac{n(P_{W|z^n}-P_{W|Z_i,z^{-i}})}{\pi(W)}\leq C_1.
        \end{equation}
        Note that here $\pi(W)$ may not be bounded, but it is not essential since $\pi(W)$ is a probability distribution.
        Afterwards, we will prove that the denominator in \eqref{equ:lem3_eq1} is lower-bounded with any given $W,Z_i$.
            \begin{align}\label{equ:lem3_bound2}
                \bigg|\int_{Z^\infty}(P_{W|Z_i,z^{-i}})dP_{Z^\infty}(z^{\infty})\bigg|&=\int_{Z^\infty}(P_{W|Z_i,z^{-i}})dP_{Z^\infty}(z^{\infty})\nonumber\\
                &= \int_{Z^\infty}\frac{\pi(W)e^{-\gamma \frac{1}{n}\sum_{1}^n\ell(W,Z_i)}}{\int_{\mathcal{W}}\pi(w)e^{-\gamma \frac{1}{n}\sum_{1}^n\ell(w,z_j)}dw}dP_{Z^\infty}(z^{\infty})\nonumber\\
                &\geq \frac{\pi(W)e^{-\gamma \sup\ell(w,z)}}{e^{-\gamma \inf\ell(w,z)}}.
            \end{align}
        With \eqref{equ:lem3_bound1} \eqref{equ:lem3_bound2}, we can bound \eqref{equ:lem3_eq1}
            \begin{align}
                \bigg|n\cdot\bigg(1-\frac{dP_W^{[n]}\otimes P_{Z_i}}{dP_{W,Z_i}^{[n]}}\bigg)\bigg|
                &=\frac{\big|\int_{Z^\infty} n(P_{W|z^n}-P_{W|Z_i,z^{-i}})dP_{Z^\infty}(z^{\infty})\big|}{\big|\int_{Z^\infty}(P_{W|Z_i,z^{-i}})dP_{Z^\infty}(z^{\infty})\big|}\nonumber\\
                &\leq \frac{\pi(W)\int_{Z^\infty} \big|\frac{n(P_{W|z^n}-P_{W|Z_i,z^{-i}})}{\pi(W)}\big|dP_{Z^\infty}(z^{\infty})}{\frac{\pi(W)e^{-\gamma \sup\ell(w,z)}}{e^{-\gamma \inf\ell(w,z)}}}\nonumber\\
                &\leq \frac{e^{-\gamma \inf\ell(w,z)}\cdot C_1}{e^{-\gamma \sup\ell(w,z)}},
            \end{align}
        and that proves the uniformly bounded conclusion.
        
        With all the bounded condition, we can now apply the dominated convergence theorem and the strong law of large numbers to get \eqref{equ:lem3_aim}
            \begin{align}
                \lim_{n\to\infty}n\cdot\bigg(1-\frac{dP_W^{[n]}\otimes P_{Z_i}}{dP_{W,Z_i}^{[n]}}\bigg)&=-\frac{\displaystyle\lim_{n\to\infty}\int_{Z^\infty} n(P_{W|z^n}-P_{W|Z_i,z^{-i}})dP_{Z^\infty}(z^{\infty})}{\displaystyle\lim_{n\to\infty}\int_{Z^\infty}(P_{W|Z_i,z^{-i}})dP_{Z^\infty}(z^{\infty})}\nonumber\\
                &=-\frac{\displaystyle\int_{Z^\infty}\lim_{n\to\infty} n(P_{W|z^n}-P_{W|Z_i,z^{-i}})dP_{Z^\infty}(z^{\infty})}{\displaystyle\int_{Z^\infty}\lim_{n\to\infty}(P_{W|Z_i,z^{-i}})dP_{Z^\infty}(z^{\infty})}\nonumber\\
                &=\left[\frac{\pi(W)e^{-\gamma L_\mu(W)}}{\int_{\mathcal{W}}\pi(w)e^{-\gamma L_\mu(w)}\cdot dw}\right]\bigg(-(\gamma(\ell(W,Z_i)-L_\mu(W)))\nonumber\\&+\mathbb{E}_{W}^{\infty}(\gamma(\ell(W,Z_i)-L_\mu(W))) \bigg)\cdot \frac{1}{\left[\frac{\pi(W)e^{-\gamma L_\mu(W)}}{\int_{\mathcal{W}}\pi(w)e^{-\gamma L_\mu(w)}\cdot dw}\right]}\nonumber\\
                &=(-\gamma(\ell(W,Z_i)-L_\mu(W)))+\mathbb{E}_{W}^{\infty}(\gamma(\ell(W,Z_i)-L_\mu(W))).
            \end{align}
        And that finishes our proof.
\end{proof}
\subsection{Proof of Theorem 2}
\begin{proof}
    From the definition
    \begin{equation}
        I_{\mathrm{SKL} }(W;Z_i)=\int_{W,Z^\infty}\bigg(\frac{dP_W^{[n]}\otimes P_{Z_i}}{dP_{W,Z_i}^{[n]}}\log\frac{dP_W^{[n]}\otimes P_{Z_i}}{dP_{W,Z_i}^{[n]}}-\log\frac{dP_W^{[n]}\otimes P_{Z_i}}{dP_{W,Z_i}^{[n]}}\bigg)dP_{W,Z^n}^{[n]}.
    \end{equation}
    Let
    \begin{equation}
        x=1-\frac{dP_W^{[n]}\otimes P_{Z_i}}{dP_{W,Z_i}^{[n]}},
    \end{equation}
    then the term inside the integration becomes
    \begin{equation}
        (1-x)\log(1-x)-\log(1-x)=x^2+\frac{1}{2(1-\xi)^2}x^3,
    \end{equation}
    where $\xi\in (0,x)$.

    Using Lemma \ref{:Lem3}, $|x|\leq \frac{C_1}{n}$ for some constant $C_1$, then for sufficiently large $n$, say $n>2C_1$, we easily get
    \begin{equation}
        \frac{1}{(1-\xi)^2}<4.
    \end{equation}
    Consider that
        \begin{align}
            \big|n^2\big((1-x)\log(1-x)-\log(1-x)\big)\big|
            &=n^2\bigg|x^2+\frac{1}{2(1-\xi)^2}x^3\bigg|\nonumber\\
            &\leq (nx)^2+n^2\frac{1}{2(1-\xi)^2}|x|^3\nonumber\\
            &\leq (nx)^2+2|nx|^3\cdot\frac{1}{n}\nonumber\\
            &\leq C_1^2+2C_1^3\cdot\frac{1}{n}
        \end{align}
    is uniformly bounded for sufficiently large $n$. From Corollary \ref{:Cor1} we know $\frac{dP_{W,Z_i}^{[n]}}{dP_{W}^{\infty}\otimes P_{Z^\infty}}$ is also uniformly bounded, therefore we can exchange the order of integration and limits using the dominated convergence theorem, which yields
        \begin{align}
            \lim_{n\to\infty}n^2I_{\mathrm{SKL} }(W;Z_i)&=\int \lim_{n\to\infty}n^2\big((1-x)\log(1-x)-\log(1-x)\big)\frac{dP_{W,Z_i}^{[n]}}{dP_{W}^{\infty}\otimes P_{Z^\infty}}dP_{W}^{\infty}\otimes P_{Z^\infty}(w,z^\infty)\nonumber\\
            &=\int \lim_{n\to\infty}n^2\bigg(x^2+\frac{1}{2(1-\xi)^2}x^3\bigg)\frac{dP_{W,Z_i}^{[n]}}{dP_{W}^{\infty}\otimes P_{Z^\infty}}dP_{W}^{\infty}\otimes P_{Z^\infty}(w,z^\infty)\nonumber\\
            &=\int \lim_{n\to\infty}n^2x^2\frac{dP_{W,Z_i}^{[n]}}{dP_{W}^{\infty}\otimes P_{Z^\infty}}dP_{W}^{\infty}\otimes P_{Z^\infty}(w,z^\infty)\nonumber\\
            &=\int \bigg((\gamma(\ell(w,z_i)-L_\mu(w)))-\mathbb{E}_{W}^{\infty}[\gamma(\ell(W,z_i)-{L_\mu(W)})]\bigg)^2 dP_{W}^{\infty}\otimes P_{Z^\infty}(w,z^\infty)\label{last_equality}\\
            &<+\infty
        \end{align}
    where the last equality \eqref{last_equality} is obtained using Corollary \ref{:Cor1} and Lemma \ref{:Lem3}. Therefore,
        \begin{align}\label{equ: n2Iskl}
            \lim_{n\to\infty}n^2\cdot I_{\mathrm{SKL} }(W;Z_i)&=\mathbb{E}_{P_{W}^{\infty}\otimes \mu}\bigg[\bigg((\gamma(\ell(W,Z)-L_\mu(W)))-\mathbb{E}_{W}^{\infty}[\gamma(\ell(W,Z)-{L_\mu(W)})]\bigg)^2\bigg]\\
            &=\gamma^2\mathbb{E}_{\mu}\bigg[\mathbb{E}_{W}^{\infty}[(\ell(W,Z)-L_\mu(W))^2]-\mathbb{E}_{W}^{\infty}[(\ell(W,Z)-L_\mu(W))]^2\bigg].
        \end{align}
\end{proof}
\subsection{Proof of Corollary 3}
\begin{proof}
    This conclusion is directly obtained from $I(W;Z_i)\leq I_{\mathrm{SKL} }(W;Z_i)$.
\end{proof}
\subsection{Proof of Remark 2}
\begin{proof}
    Using Lemma \ref{:Lem3}, we start by showing
    \begin{equation}
    \begin{aligned}
        &\lim_{n\to\infty}n\log\frac{dP_{W,Z_i}^{[n]}}{dP_W^{[n]}\otimes P_{Z_i}}=\bigg(-(\gamma(\ell(W,Z_i)-L_\mu(W)))+\mathbb{E}_{W}^{\infty}[\gamma(\ell(W,Z_i)-{L_\mu(W)})]\bigg).
    \end{aligned}
    \end{equation}
    Using similar techniques as in Theorem \ref{:Thm2}, we again have the terms inside the integration being uniformly bounded for sufficiently large $n$. Therefore, using the dominated convergence theorem, we have
    \begin{align}
    \lim_{n\to\infty}nI(W;Z_i)&=\int_{W,Z^\infty}\lim_{n\to\infty}n\log\bigg(\frac{dP_{W,Z_i}^{[n]}}{dP_W^{[n]}\otimes P_{Z_i}}\bigg)\cdot
        \bigg(\frac{dP_{W,Z_i}^{[n]}}{dP_{W}^{\infty}\otimes P_{Z^\infty}}\bigg)dP_{W}^{\infty}\otimes P_{Z^\infty}(w,z^\infty)\nonumber\\
        &=-\int_{W,Z_i}\bigg((\gamma(\ell(w,z_i)-L_\mu(w)))-\mathbb{E}_{W}^{\infty}[\gamma(\ell(W,z_i)-{L_\mu(W)})]\bigg)dP_{W}^{\infty}\otimes P_{Z_i}(w,z_i)\nonumber\\
        &=0,
    \end{align}
    which completes our proof.
\end{proof}
\subsection{Proof of Lemma 4}
\begin{proof} Without the lost of generality, we assume $\pi(w)$ is also bounded to simplify our proof. Note that this assumption can surely be taken off without affecting the conclusion.

    We first define \begin{align}
    \Delta_{z_i}(w)&\triangleq P_{W|Z^n}(w|z_i,z^{-i})-P_{W|Z^{n}}(w|z^{n})\\
    &=P_{w|z_i,z^{-i}}-P_{w|z^n}.
    \end{align}
    And therefore as proved in Lemma \ref{:Lem3}
        \begin{align}\label{equ:lem4_Delta}
        &\Delta_{z_i}(w,z^n)=\left[\frac{\pi(w)e^{-\gamma \frac{1}{n}\sum_{j=1}^n {\ell(w,z_j)}}}{\int_{\mathcal{W}}\pi(w)e^{-\gamma \frac{1}{n}\sum_{j=1}^n {\ell(w,z_j)}}\cdot dw}\right]\cdot\bigg(-(\gamma(\ell(w,z_i)-\ell(w,z)))+\mathbb{E}_{e}^{[n]}[\gamma(\ell(W,z_i)\nonumber\\&- {\ell(W,z)})]\bigg)\cdot\frac{1}{n}+o\bigg(\frac{1}{n}\bigg).
        \end{align}
    Note that if $\ell$ is only lower-bounded, the expansion still rings true almost surely. Only that this time, the $o\big(\frac{1}{n}\big)$ will no longer be uniformly small.
    
    Then we consider
        \begin{align}\label{equ:lem4_expand_nJ}
            n\cdot J_i^{[n]}(w,z_i)&=n\cdot\left[\log\bigg(\int_{\mathcal{Z}^{n-1}}P_{w|z_i,z^{-i}}d\mu^{n-1}(z^{-i}) \right)- \int_{\mathcal{Z}^{n-1}}\log(P_{w|z_i,z^{-i}})d\mu^{n-1}(z^{-i})\bigg]\nonumber\\
            &=n\cdot\left[\log\bigg(\int_{\rm{Z}^{n}}P_{w|z^{n}}+\Delta_{z_i}(w,z^n)\ d\mu^n(z^n) \right)\nonumber\\&- \int_{\rm{Z}^{n}}\log(P_{w|z^{n}}+\Delta_{z_i}(w,z^n))d\mu^n(z^n)\bigg].
        \end{align}
    For any $r\in [r_1,r_2]\subset\mathbb{R}$ where $r_1>0,r_2<+\infty$, let $0<m_1\leq r+x\leq m_2<+\infty$, it's easy to prove that there exists a real number $a<0$, such that
    \begin{equation}\label{ineq:lem4_log}
        \begin{aligned}
        \log(r+x)&\leq \log(r) + \frac{x}{r},\\
        \log(r+x)&\geq \log(r) + \frac{x}{r} + ax^2.
        \end{aligned}
    \end{equation}
    Now, we assign 
    \begin{equation}\label{eq:lem4_J hat}
    \hat{J}^{[n]}(w)=\log\bigg(\int_{\rm{Z}^{n}}P_{w|z^{n}}d\mu^n(z^n)\bigg)-\int_{\rm{Z}^{n}}\log P_{w|z^{n}}d\mu^n(z^n).
    \end{equation}
    From the proof of Lemma \ref{:Lem3}, we know that $n\cdot \Delta_{z_i}(w,z^n)$ is uniformly bounded. It is also easy to see that $P_{w|z^n}\in [r_1,r_2]$ where $r_1>0,r_2<+\infty$, and $0<r_1\leq P_{w|z^n}+\Delta_{z_i}(w,z^n)\leq r_2<+\infty$. Combining \eqref{equ:lem4_expand_nJ} \eqref{ineq:lem4_log} and \eqref{eq:lem4_J hat} we get
        \begin{align}
            n\cdot(\hat{J}^{[n]}(w)-J_i^{[n]}(w,z_i))&\leq \bigg[\int_{\rm{Z}^{n}}\frac{n\Delta_{z_i}(w,z^n)}{P_{w|z^{n}}}d\mu^n(z^n)-\frac{\int_{\rm{Z}^{n}}n\Delta_{z_i}(w,z^n)d\mu^n(z^n)}{\int_{\rm{Z}^{n}}P_{w|z^{n}}d\mu^n(z^n)}\bigg]\nonumber\\
            &-\frac{a}{n}\bigg(\int_{\rm{Z}^n}n\cdot \Delta_{z_i}(w,z^n)d\mu^n(z^n)\bigg)^2,
        \end{align}
        \begin{align}
            n\cdot(\hat{J}^{[n]}(w)-J_i^{[n]}(w,z_i))&\geq \bigg[\int_{\rm{Z}^{n}}\frac{n\Delta_{z_i}(w,z^n)}{P_{w|z^{n}}}d\mu^n(z^n)-\frac{\int_{\rm{Z}^{n}}n\Delta_{z_i}(w,z^n)d\mu^n(z^n)}{\int_{\rm{Z}^{n}}P_{w|z^{n}}d\mu^n(z^n)}\bigg]\nonumber\\
            &+\frac{a}{n}\int_{\rm{Z}^n}\bigg(n\cdot \Delta_{z_i}(w,z^n) \bigg)^2d\mu^n(z^n),
        \end{align}
    from which we get $n\cdot(\hat{J}^{[n]}(w)-J_i^{[n]}(w,z_i))$ is uniformly bounded. Then we apply the squeeze theorem, and yield
    \begin{equation}
        \begin{aligned}
        \lim_{n\to\infty}n\cdot(\hat{J}^{[n]}(w)-J_i^{[n]}(w,z_i))=\lim_{n\to\infty}\bigg[\int_{\rm{Z}^{n}}\frac{n\Delta_{z_i}(w,z^n)}{P_{w|z^{n}}}d\mu^n(z^n)-\frac{\int_{\rm{Z}^{n}}n\Delta_{z_i}(w,z^n)d\mu^n(z^n)}{\int_{\rm{Z}^{n}}P_{w|z^{n}}d\mu^n(z^n)}\bigg].\\
        \end{aligned}
    \end{equation}
    Eventually combining the result that $n\cdot \Delta_{z_i}(w,z^n)$ is uniformly bounded and \eqref{equ:lem4_Delta}, we complete our proof by showing
        \begin{align}
        \lim_{n\to\infty}n\cdot(\hat{J}^{[n]}(w)-J_i^{[n]}(w,z_i))&=\bigg[\int_{\rm{Z}^{\infty}}\frac{\displaystyle\lim_{n\to\infty}n\Delta_{z_i}(w,z^n)}{\displaystyle\lim_{n\to\infty}P_{w|z^{n}}}dP_{Z^\infty}(z^{\infty})\nonumber\\&-\frac{\int_{\rm{Z}^{\infty}}\displaystyle\lim_{n\to\infty}n\Delta_{z_i}(w,z^n)dP_{Z^\infty}(z^{\infty})}{\int_{\rm{Z}^{\infty}}\displaystyle\lim_{n\to\infty}P_{w|z^{n}}dP_{Z^\infty}(z^{\infty})}\bigg]\\
        &=0
        \end{align}
    where we used the strong low of large numbers and the dominated convergence theorem.
\end{proof}
\subsection{Proof of Lemma 5}
\begin{proof} We define
    \begin{equation}
        \delta(w,z^n)\triangleq \frac{1}{n}\sum_{i=0}^n\ell(w,z_i)-L_\mu(w).
    \end{equation}
    We first prove that
    \begin{equation}\label{equ:lem5 cod1}
        \int_{Z^n}\mathbb{E}_{W}^{\infty}\big[\sqrt{n}\cdot\vert\delta(W,z^n)\vert\big]^2d\mu^n(z^n),
    \end{equation}
    \begin{equation}\label{equ:lem5 cod2}
        \int_{Z^n}\mathbb{E}_{W}^{\infty}\big[\sqrt{n}\cdot|\delta(W,z^n)|\big]^4d\mu^n(z^n)
    \end{equation}
    are both bounded. For \eqref{equ:lem5 cod1}, using Cauchy's inequality
    \begin{align}\label{ieq:2mmt}
        \int_{Z^n}\mathbb{E}_{W}^{\infty}\big[\sqrt{n}\cdot\vert\delta(W,z^n)\vert\big]^2d\mu^n(z^n)&\leq \int_{Z^n}\mathbb{E}_{W}^{\infty}\big[n\cdot\delta(W,z^n)^2\big]d\mu^n(z^n)\nonumber\\
        &=\mathbb{E}_{W}^{\infty}\big[\int_{Z^n}n\cdot\delta(W,z^n)^2d\mu^n(z^n)\big]\nonumber\\
        &=\mathbb{E}_{W}^{\infty}\big[\mathbb{E}_{\mu^n}[n\cdot\delta(W,Z^n)^2]\big]\nonumber\\
        &=\mathbb{E}_{W}^{\infty}\bigg[\mathbb{E}_{\mu^n}[\frac{1}{n}\cdot\sum_{i=1}^n(\ell(W,Z_i)-L_\mu(W))^2]+\nonumber\\
        &\begin{matrix}
        \underbrace{\mathbb{E}_{\mu^n}[\frac{1}{n}\sum_{i\neq j}(\ell(W,Z_i)-L_\mu(W))(\ell(W,Z_j)-L_\mu(W))]}\bigg] \\ =0
        \end{matrix}\nonumber\\
        &=\mathbb{E}_{W}^{\infty}\bigg[\mathbb{E}_{\mu}[(\ell(W,Z)-L_\mu(W))^2]\bigg]
    \end{align}
    and is therefore bounded.

    Similarly for \eqref{equ:lem5 cod2}, we again use Cauchy's inequality and yield
        \begin{align}\label{ieq:4mmt}
        \int_{Z^n}\mathbb{E}_{W}^{\infty}\big[\sqrt{n}\cdot\vert\delta(W,z^n)\vert\big]^4d\mu^n(z^n)&\leq \int_{Z^n}\mathbb{E}_{W}^{\infty}\big[n^2\cdot\delta(W,z^n)^4\big]d\mu^n(z^n)\nonumber\\
        &=\mathbb{E}_{W}^{\infty}\big[\int_{Z^n}n^2\cdot\delta(W,z^n)^4d\mu^n(z^n)\big]\nonumber\\
        &=\mathbb{E}_{W}^{\infty}\big[\mathbb{E}_{\mu^n}[n^2\cdot\delta(W,Z^n)^4]\big]\nonumber\\
        &=\mathbb{E}_{W}^{\infty}\bigg[\frac{1}{n}\mathbb{E}_{\mu}[(\ell(W,Z)-L_\mu(W))^4]+\nonumber\\&\frac{3(n-1)}{n}\mathbb{E}_{\mu}[(\ell(W,Z)-L_\mu(W))^2]^2\bigg]
    \end{align}
    and is therefore bounded as well. And as will be discussed in Annotation \ref{:Rmk2}, this result holds true for all power $k$ besides $2$ and $4$.
    
    Notice that $\hat{J}^{[n]}(w)$ can also be written as
    \begin{equation}\label{equ:lem5 J hat}
        \hat{J}^{[n]}(w)=\log\int_{\rm{Z}^{n}}\frac{P_{w|z^{n}}}{P_{w}^{\infty}}d\mu^n(z^n)-\int_{\rm{Z}^{n}}\log \frac{P_{w|z^{n}}}{P_{w}^{\infty}}d\mu^n(z^n).
    \end{equation}
    Using Taylor's theorem with Lagrange form remainder (similar to \eqref{equ:general_expand_form}), we have the following expansion
        \begin{align}\label{equ:lem5 expand}
            \frac{P_{w|z^{n}}}{P_{w}^{\infty}}&=\frac{\pi(w)e^{-\gamma (L_\mu(w)+\delta(w,z^n))}}{\int_{\mathcal{W}}\pi(w)e^{-\gamma (L_\mu(w)+\delta(w,z^n))}\cdot dw}\cdot\frac{\int_{\mathcal{W}}\pi(w)e^{-\gamma L_\mu(w)}\cdot dw}{\pi(w)e^{-\gamma L_\mu(w)}}\nonumber\\
            &=\bigg(1+R_1(w,\delta)\delta(w,z^n)\bigg)\cdot\bigg(1-R_2(w,\delta)\cdot\mathbb{E}_{W}^{\infty}[R_1(W,\delta)\delta(W,z^n)]\bigg)
        \end{align}
    where $R_1(w,\delta),R_2(w,\delta)$ are both bounded. More specifically
    \begin{equation}
        R_1(w,\delta)=-\gamma e^{-\gamma\xi_1}\ \ \xi_1\in(0,\delta),
    \end{equation}
    \begin{equation}
        R_2(w,\delta)=\bigg(\frac{\int_{\mathcal{W}}\pi(w)e^{-\gamma L_\mu(w)}\cdot dw}{\xi_2}\bigg)^2,
    \end{equation}
    where $\xi_2\in(\int_{\mathcal{W}}\pi(w)e^{-\gamma L_\mu(w)}\cdot dw, \int_{\mathcal{W}}\pi(w)e^{-\gamma (L_\mu(w)+\delta(w,z^n))}\cdot dw)$. For simplicity, we respectively use $M_1$ and $M_2$ to each represent a upper-bound for $|R_1|$ and $|R_2|$.
    
    We employ the same technique used in the proof of Lemma \ref{:Lem4} to bound $\log(r+x)$, where $r=1$ in this circumstance. With this, we are able to dominate $n\cdot \hat{J}^{[n]}(w)$ using the expanded form of $\delta(w,z^n)$. This time for some $a<0$
        \begin{align}\label{ineq:lem5 bound n J}
            n\cdot\hat{J}^{[n]}(w)\leq& -a\int_{\mathcal{Z}^n}n\cdot\bigg(R_1(w,\delta)\delta(w,z^n)-R_2(w,\delta)\mathbb{E}_{W}^{\infty}[R_1(W,\delta)\delta(W,z^n)]\nonumber\\&
            -R_1(w,\delta)R_2(w,\delta)\delta(w,z^n)\mathbb{E}_{W}^{\infty}[R_1(W,\delta)\delta(W,z^n)]\bigg)^2d\mu^n(z^n)\nonumber\\
            \leq& |a|\bigg\{
            M_1^2\int_{\mathcal{Z}^n}|\sqrt{n}\delta(w,z^n)|^2d\mu^n(z^n)+M_1^2M_2^2\int_{\mathcal{Z}^n}\mathbb{E}_{W}^{\infty}[\sqrt{n}\cdot|\delta(w,z^n)|]^2d\mu^n(z^n)\nonumber\\
            &+2M_1^2M_2\int_{\mathcal{Z}^n}|\sqrt{n}\delta(w,z^n)|\cdot\mathbb{E}_{W}^{\infty}[\sqrt{n}\cdot|\delta(W,z^n)|]d\mu^n(z^n)\nonumber\\
            &+\frac{1}{\sqrt{n}}\cdot 2M_1^3M_2\int_{\mathcal{Z}^n}|\sqrt{n}\delta(w,z^n)|^2\cdot\mathbb{E}_{W}^{\infty}[\sqrt{n}\cdot|\delta(W,z^n)|]d\mu^n(z^n)\nonumber\\
            &+\frac{1}{\sqrt{n}}\cdot 2M_1^3M_2^2\int_{\mathcal{Z}^n}|\sqrt{n}\delta(w,z^n)|\cdot\mathbb{E}_{W}^{\infty}[\sqrt{n}\cdot|\delta(W,z^n)|]^2d\mu^n(z^n)\nonumber\\
            &+\frac{1}{n}\cdot M_1^4M_2^2\int_{\mathcal{Z}^n}|\sqrt{n}\delta(w,z^n)|^2\cdot\mathbb{E}_{W}^{\infty}[\sqrt{n}\cdot|\delta(W,z^n)|]^2d\mu^n(z^n)
            \bigg\}\nonumber\\
            \leq & |a|\bigg\{
            M_1^2\int_{\mathcal{Z}^n}|\sqrt{n}\delta(w,z^n)|^2d\mu^n(z^n)+M_1^2M_2^2\int_{\mathcal{Z}^n}\mathbb{E}_{W}^{\infty}[\sqrt{n}\cdot|\delta(w,z^n)|]^2d\mu^n(z^n)\nonumber\\
            &+2M_1^2M_2\bigg(\int_{\mathcal{Z}^n}|\sqrt{n}\delta(w,z^n)|^2d\mu^n(z^n)\bigg)^{\frac{1}{2}}\bigg(\int_{\mathcal{Z}^n}\mathbb{E}_{W}^{\infty}[\sqrt{n}\cdot|\delta(W,z^n)|]^2d\mu^n(z^n)\bigg)^{\frac{1}{2}}\nonumber\\
            &+\frac{1}{\sqrt{n}}\cdot 2M_1^3M_2\bigg(\int_{\mathcal{Z}^n}|\sqrt{n}\delta(w,z^n)|^4d\mu^n(z^n)\bigg)^{\frac{1}{2}}\bigg(\int_{\mathcal{Z}^n}\mathbb{E}_{W}^{\infty}[\sqrt{n}\cdot|\delta(W,z^n)|]^2d\mu^n(z^n)\bigg)^{\frac{1}{2}}\nonumber\\
            &+\frac{1}{\sqrt{n}}\cdot 2M_1^3M_2^2\bigg(\int_{\mathcal{Z}^n}|\sqrt{n}\delta(w,z^n)|^2d\mu^n(z^n)\bigg)^{\frac{1}{2}}\bigg(\int_{\mathcal{Z}^n}\mathbb{E}_{W}^{\infty}[\sqrt{n}\cdot|\delta(W,z^n)|]^4d\mu^n(z^n)\bigg)^{\frac{1}{2}}\nonumber\\
            &+\frac{1}{n}\cdot M_1^4M_2^2\bigg(\int_{\mathcal{Z}^n}|\sqrt{n}\delta(w,z^n)|^4d\mu^n(z^n)\bigg)^{\frac{1}{2}}\bigg(\int_{\mathcal{Z}^n}\mathbb{E}_{W}^{\infty}[\sqrt{n}\cdot|\delta(W,z^n)|]^4d\mu^n(z^n)\bigg)^{\frac{1}{2}}
            \bigg\}.
        \end{align}
    Let
    \begin{equation}\label{equ:lem5 mmt2}
        {\rm{Mmt}_2}(\ell(w,Z))={\rm{Var}}(\ell(w,Z))=\int_{\rm{Z}}(\ell(w,z)-L_\mu(w))^2dP(z),
    \end{equation}
    \begin{equation}\label{equ:lem5 mmt4}
        {\rm{Mmt}_4}(\ell(w,Z))=\int_{\rm{Z}}(\ell(w,z)-L_\mu(w))^4dP(z),
    \end{equation}
    then it can be calculated that
    \begin{equation}\label{equ:lem5 var1}
        \int_{\mathcal{Z}^n}|\sqrt{n}\delta(w,z^n)|^2d\mu^n(z^n)={\rm{Var}}(\ell(w,Z)),
    \end{equation}
    \begin{equation}\label{equ:lem5 var2}
        \int_{\mathcal{Z}^n}|\sqrt{n}\delta(w,z^n)|^4d\mu^n(z^n)=\frac{1}{n}{\rm{Mmt}_4}(\ell(w,Z))+\frac{3(n-1)}{n}{\rm{Var}}(\ell(w,Z))^2.
    \end{equation}
    With the bounded energy function, \eqref{equ:lem5 mmt2} \eqref{equ:lem5 mmt4} \eqref{equ:lem5 var1} and \eqref{equ:lem5 var2} are all uniformly bounded, thus for some $C_1>0, C_2>0$,
    \begin{equation}
        \int_{\mathcal{Z}^n}|\sqrt{n}\delta(w,z^n)|^2d\mu^n(z^n)\leq C_1,
    \end{equation}
    \begin{equation}
        \int_{\mathcal{Z}^n}|\sqrt{n}\delta(w,z^n)|^4d\mu^n(z^n)\leq C_2.
    \end{equation}
    Then with condition \eqref{equ:lem5 cod1} and \eqref{equ:lem5 cod2}, we assign upper-bounds $C_3$ and $C_4$ for them respectively, getting
    \begin{equation}
        \int_{Z^n}\mathbb{E}_{W}^{\infty}\big[\sqrt{n}\cdot\vert\delta(W,z^n)\vert\big]^2d\mu^n(z^n)\leq C_3,
    \end{equation}
    \begin{equation}
        \int_{Z^n}\mathbb{E}_{W}^{\infty}\big[\sqrt{n}\cdot|\delta(W,z^n)|\big]^4d\mu^n(z^n)\leq C_4.
    \end{equation}
    Now, \eqref{ineq:lem5 bound n J} becomes
        \begin{align}
            n\cdot\hat{J}^{[n]}(w)\leq&\  |a|\bigg(M_1^2C_1+M_1^2M_2^2C_3+2M_1^2M_2C_1^{\frac{1}{2}}C_3^{\frac{1}{2}}
            +\frac{1}{\sqrt{n}}\cdot 2M_1^3M_2C_2^{\frac{1}{2}}C_3^{\frac{1}{2}}\nonumber\\&
            +\frac{1}{\sqrt{n}}\cdot 2M_1^3M_2^2C_1^{\frac{1}{2}}C_4^{\frac{1}{2}}
            +\frac{1}{n}\cdot M_1^4M_2^2C_2^{\frac{1}{2}}C_4^{\frac{1}{2}}\bigg),
        \end{align}
    and is therefore dominated by a constant, which surely proves that $n\cdot\hat{J}^{[n]}(w)$ is uniformly bounded. From this we also see that the terms with order $k\geq 3$ (in other words, containing $\hat{J}^{[n]}(w)^k$) are controlled by a factor $(1/n)^{(k/2-1)}$, thus those terms converges to zero with $n\to\infty$. This observation is used many times in later proofs.

    Then, we prove that $\lim_{n\to\infty}n\cdot\hat{J}^{[n]}(w)$ exists. Similar to \eqref{equ:lem5 expand}, we have
        \begin{align}\label{equ:lem5 new expand}
            \frac{P_{w|z^{n}}}{P_{w}^{\infty}}&=\frac{\pi(w)e^{-\gamma (L_\mu(w)+\delta(w,z^n))}}{\int_{\mathcal{W}}\pi(w)e^{-\gamma (L_\mu(w)+\delta(w,z^n))}\cdot dw}\cdot\frac{\int_{\mathcal{W}}\pi(w)e^{-\gamma L_\mu(w)}\cdot dw}{\pi(w)e^{-\gamma L_\mu(w)}}\nonumber\\
            &=\bigg(1-\gamma\delta(w,z^n)+R_1(w,\delta)\delta(w,z^n)^2\bigg)\cdot\bigg(1-\mathbb{E}_{W}^{\infty}[-\gamma\delta(W,z^n)+R_1(W,\delta)\delta(W,z^n)^2]\nonumber\\&+R_2(w,\delta)\cdot\mathbb{E}_{W}^{\infty}[-\gamma\delta(W,z^n)+R_1(W,\delta)\delta(W,z^n)^2]^2\bigg)
        \end{align}
    where $R_1(w,\delta)=\frac{1}{2}\gamma^2e^{-\gamma\xi_1}$, $\xi_1\in(0,\delta)$, $R_2(w,\delta)=\bigg(\frac{\int_{\mathcal{W}}\pi(w)e^{-\gamma L_\mu(w)}\cdot dw}{\xi_2}\bigg)^3$, and\\$\xi_2\in(\int_{\mathcal{W}}\pi(w)e^{-\gamma L_\mu(w)}\cdot dw, \int_{\mathcal{W}}\pi(w)e^{-\gamma (L_\mu(w)+\delta(w,z^n))}\cdot dw)$.
    
    Again similar to \eqref{ineq:lem4_log}, we have for some $a<0$
    \begin{equation}\label{ineq:lem5 new log1}
        \log(1+x)\leq x-\frac{1}{2}x^2+\frac{1}{3}x^3,
    \end{equation}
    \begin{equation}\label{ineq:lem5 new log2}
        \log(1+x)\geq x-\frac{1}{2}x^2+\frac{1}{3}x^3+ax^4,
    \end{equation}
    with $0<m\leq 1+x\leq M<+\infty$. Very similar to \eqref{ineq:lem5 bound n J}, combining \eqref{equ:lem5 J hat}, \eqref{equ:lem5 new expand}, \eqref{ineq:lem5 new log1} and \eqref{ineq:lem5 new log2}, we yield an upper-bound and a lower-bound of $n\cdot\hat{J}^{[n]}(w)$. It is easy to see that the terms with order equal to or greater than 3 converges to zero. Therefore, using the squeeze theorem, the upper-bound and lower-bound converges to the same value characterized by terms with order 1 and 2, and we eventually get
    \begin{align}
        \lim_{n\to\infty}n\cdot\hat{J}^{[n]}(w)&=\begin{matrix}\displaystyle
        \lim_{n\to\infty}\underbrace{-\frac{1}{2}\bigg[\int_{Z^n}-\gamma\sqrt{n}\delta(w,z^n)d\mu^n(z^n)+\int_{Z^n}\gamma\mathbb{E}_{W}^{\infty}[\sqrt{n}\delta(W,z^n)]d\mu^n(z^n)\bigg]^2}\\=0\end{matrix}\nonumber\\&
        +\int_{Z^n}\frac{1}{2}\bigg(-\gamma\sqrt{n}\delta(w,z^n)+\gamma\mathbb{E}_{W}^{\infty}[\sqrt{n}\delta(W,z^n)]\bigg)^2d\mu^n(z^n)\nonumber\\
        &=\lim_{n\to\infty}\int_{Z^n}\frac{1}{2}\bigg(-\gamma\sqrt{n}\delta(w,z^n)+\gamma\mathbb{E}_{W}^{\infty}[\sqrt{n}\delta(W,z^n)]\bigg)^2d\mu^n(z^n)\nonumber\\
        &=\frac{1}{2}\gamma^2\mathbb{E}_{\mu}[(\ell(w,Z)-L_\mu(w))^2]-\gamma^2\mathbb{E}_{\mu}\big[(\ell(w,Z)-L_\mu(w))\mathbb{E}_{W}^{\infty}[\ell(W,Z)-L_\mu(W)]\big]\nonumber\\&+\frac{1}{2}\gamma^2\mathbb{E}_{\mu}\big[\mathbb{E}_{W}^{\infty}[\ell(W,Z)-L_\mu(W)]^2\big],
    \end{align}
    which is a bounded function of $w$. That proves the lemma.
\end{proof}
\begin{annotation}\label{:Rmk2}
    According to \cite{47c0448d-fb64-36c8-80c1-f76c50302dbf}, for a series of i.i.d sampled $X$ with zero mean, when the $k^{\text{th}}$-moment of $X$ exists , 
    \begin{equation}\label{ieq:1962}
    \begin{aligned}
    \mathbb{E}\bigg[\bigg(\frac{|X_1+X_2+\cdots+X_n|}{\sqrt{n}}\bigg)^{k}\bigg]<C_k
    \end{aligned}
    \end{equation}
    for some constant $C_k$ depending only on the distribution of $X$. 
    
    Following its proof, it is easy to see that $C_k$ only depends on $\mathbb{E}[|X|^k]$ and $C_{k-1}$. The idea of the proof is to use mathematical induction. Knowing that when $k=2$, the conclusion holds for a $C_2=\mathbb{E}[X^2]$. Then if the conclusion holds for $k$ with $C_k$, it is able to prove that there exists a $C_{k+1}$ which only depends on $C_k$ and $\mathbb{E}[|X|^{k+1}]$ for the inequality to hold for all $n$.
    
    Therefore, for $\forall k>0$, the $k^{\text{th}}$-moment of $\ell(w,Z)$, $\mathbb{E}[|\ell(w,Z)|^{k}]$ surly exists, and is uniformly bounded since $\ell(w,z)$ is a bounded function. We then have the $C_k$ introduced above is uniform over $w$ for $\ell(w,Z)-L_\mu(w)$, in other words,
    \begin{equation}\label{ieq:kmomt1}
    \exists M_k<+\infty,\forall n,\forall w\ \mathbb{E}_{\mu^n}[(\sqrt{n}|\delta(w,Z^n)|)^{k}]<M_k.
    \end{equation}
    And similar to our proof in \eqref{ieq:2mmt} and \eqref{ieq:4mmt}, 
    \begin{equation}\label{ieq:kmomt2}
    \exists M_k<+\infty,\forall n,\ \mathbb{E}_{\mu^n}[(\sqrt{n}\mathbb{E}_{W}^{\infty}[|\delta(W,Z^n)|])^{k}]<M_k.
    \end{equation}
    \textbf{The result plays an important role in guaranteeing the third or higher order terms converges to zero.}

    For a more general setting, consider $2<p<\infty$ and independent random variables $X_i\in L^p$ with zero mean that are not necessarily identical. Then by \cite{rosenthal1970subspaces}, we have
    \begin{equation}
        \bigg(\mathbb{E}\bigg[\bigg|\sum_{i=1}^nX_i\bigg|^p\bigg]\bigg)^{\frac{1}{p}}\leq K_p\max\bigg\{\bigg(\sum_{i=1}^n\mathbb{E}[|X_i|^p]\bigg)^{\frac{1}{p}},\bigg(\sum_{i=1}^n\mathbb{E}[|X_i|^2]\bigg)^{\frac{1}{2}}\bigg\}
    \end{equation}
    and
    \begin{equation}
        \bigg(\mathbb{E}\bigg[\bigg|\sum_{i=1}^nX_i\bigg|^p\bigg]\bigg)^{\frac{1}{p}}\geq \frac{1}{2}\max\bigg\{\bigg(\sum_{i=1}^n\mathbb{E}[|X_i|^p]\bigg)^{\frac{1}{p}},\bigg(\sum_{i=1}^n\mathbb{E}[|X_i|^2]\bigg)^{\frac{1}{2}}\bigg\},
    \end{equation}
    where $K_p$ is a constant only depending on $p$. Note that this theorem can be seen as a version of Khintchine inequality. 
    
    It provide an estimate of the moment of the sum of independent random variables in a broader setting. Letting all the random variables be identically distributed, and the result recovers to \eqref{ieq:1962}, and can surely yield \eqref{ieq:kmomt1} and \eqref{ieq:kmomt2}.
\end{annotation}
\subsection{Proof of Theorem 3}
\begin{proof}
    We define
    \begin{equation}
        K_0(w)\triangleq \lim_{n\to\infty}n\cdot \hat{J}^{[n]}(w).
    \end{equation}
    The existence of the limit is obtained using Lemma \ref{:Lem5}. Using Lemma \ref{:Lem4} and Lemma \ref{:Lem5}, we get
    \begin{equation}
        \lim_{n\to\infty}n\cdot J_i^{[n]}(w,z_i)=K_0(w)
    \end{equation}
    and
        \begin{align}
            0\leq n\cdot J_i^{[n]}(w,z_i)&=n\cdot \hat{J}^{[n]}(w) - n\cdot(\hat{J}^{[n]}(w)-J_i^{[n]}(w,z_i))\\
            &\leq |n\cdot J_i^{[n]}(w,z_i)| +|n\cdot(\hat{J}^{[n]}(w)-J_i^{[n]}(w,z_i))|
        \end{align}
    which is uniformly bounded.

    Using Lemma \ref{:Lem3}, we know that $\displaystyle n\cdot\bigg(1-\frac{dP_W^{[n]}\otimes P_{Z_i}}{dP_{W,Z_i}^{[n]}}\bigg)$ is uniformly bounded. Therefore, together with Corollary \ref{:Cor1}, we have
        \begin{align}
            &\ \lim_{n\to\infty}n\cdot \bigg(\sum_{i=1}^nI_{\mathrm{SKL} }(W;Z_i)-I_{\mathrm{SKL} }(W;S)\bigg)\nonumber\\
            =&\ \lim_{n\to\infty}\int_{W,Z^\infty}n\cdot J_i^{[n]}(w,z_i)\cdot n\cdot\bigg(1-\frac{dP_W^{[n]}\otimes P_{Z_i}}{dP_{W,Z_i}^{[n]}}\bigg)\left(\frac{dP_{W,Z_i}^{[n]}}{dP_{W}^{\infty}\otimes P_{Z^\infty}}\right)dP_{W}^{\infty}\otimes P_{Z^\infty}(w,z^\infty) \nonumber\\
            =&\ \int_{W,Z_i}K_0(w)\bigg(-(\gamma(\ell(w,z_i)-L_\mu(w)))+\mathbb{E}_{W}^{\infty}[\gamma(\ell(W,z_i)-{L_\mu(W)})]\bigg)dP_{W}^{\infty}\otimes P_{Z_i}(w,z_i)\nonumber\\
            =&\ 0,
        \end{align}
    which proves the theorem.
\end{proof}
\subsection{Proof of Theorem 4}
\begin{proof}
    The proof of this theorem utilizes the very same techniques in the proof of Lemma \ref{:Lem5}. Similar to our previous expansion \eqref{equ:lem5 expand} and \eqref{equ:lem5 new expand}, we can write
        \begin{align}\label{equ:lem5 new new expand}
            \frac{P_{w|z^{n}}}{P_{w}^{\infty}}&=\frac{\pi(w)e^{-\gamma (L_\mu(w)+\delta(w,z^n))}}{\int_{\mathcal{W}}\pi(w)e^{-\gamma (L_\mu(w)+\delta(w,z^n))}\cdot dw}\cdot\frac{\int_{\mathcal{W}}\pi(w)e^{-\gamma L_\mu(w)}\cdot dw}{\pi(w)e^{-\gamma L_\mu(w)}}\nonumber\\
            &=\bigg(1-\gamma\delta(w,z^n)+\frac{1}{2}\gamma^2\delta(w,z^n)^2+R_1(w,\delta)\delta(w,z^n)^3\bigg)\cdot\bigg(1-\mathbb{E}_{W}^{\infty}[-\gamma\delta(W,z^n)+\frac{1}{2}\gamma^2\delta(W,z^n)^2\nonumber\\&+R_1(W,\delta)\delta(W,z^n)^3]+\mathbb{E}_{W}^{\infty}[-\gamma\delta(W,z^n)+\frac{1}{2}\gamma^2\delta(W,z^n)^2+R_1(W,\delta)\delta(W,z^n)^3]^2\nonumber\\&+R_2(w,\delta)\cdot\mathbb{E}_{W}^{\infty}[-\gamma\delta(W,z^n)+\frac{1}{2}\gamma^2\delta(W,z^n)^2+R_1(W,\delta)\delta(W,z^n)^3]^3\bigg).
        \end{align}
    We once again assign $x=1-\frac{P_{w|z^{n}}}{P_{w}^{\infty}}$, then similar to \eqref{ineq:lem5 new log1} and \eqref{ineq:lem5 new log2}, for some $a>0$,
    \begin{equation}\label{ineq:thm5 new new log1}
        -\log(1-x)\geq x+\frac{1}{2}x^2+\frac{1}{3}x^3,
    \end{equation}
    \begin{equation}\label{ineq:thm5 new new log2}
        -\log(1-x)\leq x+\frac{1}{2}x^2+\frac{1}{3}x^3+ax^4.
    \end{equation}
    Thus for lautum information, we can calculate the value for any given $n$ as
    \begin{equation}
        \begin{aligned}
            n\cdot L(W;S)=&n\cdot \int_{\mathcal{W},\mathcal{Z}^n}-\log(1-x)dP_{W}^{[n]}\otimes \mu^n(w,z^n).
        \end{aligned}
    \end{equation}
    Therefore,
    \begin{equation}\label{ineq:Thm5 squeeze1}
        \begin{aligned}
            n\cdot L(W;S)\leq n\cdot \int_{\mathcal{W},\mathcal{Z}^n}(x+\frac{1}{2}x^2+\frac{1}{3}x^3+ax^4)dP_{W}^{[n]}\otimes \mu^n(w,z^n),
        \end{aligned}
    \end{equation}
    \begin{equation}\label{ineq:Thm5 squeeze2}
        \begin{aligned}
            n\cdot L(W;S)\geq n\cdot \int_{\mathcal{W},\mathcal{Z}^n}(x+\frac{1}{2}x^2+\frac{1}{3}x^3)dP_{W}^{[n]}\otimes \mu^n(w,z^n),
        \end{aligned}
    \end{equation}
    
    It is natural to then combine \eqref{equ:lem5 new new expand} \eqref{ineq:Thm5 squeeze1} and \eqref{ineq:Thm5 squeeze2}. Once again, using the squeeze theorem, the terms with order higher or equal to 3 converges to zero (as proved in Lemma \ref{:Lem5} and Annotation \ref{:Rmk2}), and the two sides of the inequalities converges to the same value as shown below
        \begin{align}\label{equ:Thm5 nL}
            &\ \lim_{n\to\infty} n\cdot L(W;S)\nonumber\\=&\ \lim_{n\to\infty}n\cdot \int_{\mathcal{W},\mathcal{Z}^n}\bigg[\gamma\bigg(-\delta(w,z^n)+\mathbb{E}_{W}^{\infty}[\delta(W,z^n)]\bigg)+\nonumber\\
            &\ \gamma^2\bigg(-\delta(w,z^n)\mathbb{E}_{W}^{\infty}[\delta(W,z^n)]+\frac{1}{2}\delta(w,z^n)^2-\frac{1}{2}\mathbb{E}_{W}^{\infty}[\delta(W,z^n)^2]+\mathbb{E}_{W}^{\infty}[\delta(W,z^n)]^2+\nonumber\\
            &\ \frac{1}{2}\big(\mathbb{E}_{W}^{\infty}[\delta(W,z^n)]-\delta(w,z^n)\big)^2
            \bigg)\bigg]dP_{W}^{[n]}\otimes \mu^n(w,z^n)\nonumber\\
            =&\ \lim_{n\to\infty}n\cdot \int_{\mathcal{W},\mathcal{Z}^n}\gamma^2\bigg(-\delta(w,z^n)\mathbb{E}_{W}^{\infty}[\delta(W,z^n)]+\frac{1}{2}\delta(w,z^n)^2-\frac{1}{2}\mathbb{E}_{W}^{\infty}[\delta(W,z^n)^2]+\nonumber\\
            &\ \mathbb{E}_{W}^{\infty}[\delta(W,z^n)]^2+\frac{1}{2}\big(\mathbb{E}_{W}^{\infty}[\delta(W,z^n)]-\delta(w,z^n)\big)^2
            \bigg)dP_{W}^{[n]}\otimes \mu^n(w,z^n)\nonumber\\
            =&\ \lim_{n\to\infty}\gamma^2n\cdot \int_{\mathcal{W},\mathcal{Z}^n}\big(\mathbb{E}_{W}^{\infty}[\delta(W,z^n)]-\delta(w,z^n)\big)^2-\frac{1}{2}\big(\mathbb{E}_{W}^{\infty}[\delta(W,z^n)^2]-\mathbb{E}_{W}^{\infty}[\delta(W,z^n)]^2\big)dP_{W}^{[n]}\otimes \mu^n(w,z^n)\nonumber\\
            =&\ \gamma^2\lim_{n\to\infty}\bigg\{\mathbb{E}_{\mu}\big[\mathbb{E}_{W}^{\infty}[(\ell(W,Z)-L_\mu(W))]^2\big]+\mathbb{E}_{P_{W}^{[n]}\otimes \mu}\big[(\ell(W,Z)-L_\mu(W))^2\big]-\nonumber\\&\ 2\mathbb{E}_{\mu}\big[\mathbb{E}_{W}^{\infty}[(\ell(W,Z)-L_\mu(W))]\mathbb{E}_{P_{W}^{[n]}}[(\ell(W,Z)-L_\mu(W))]\big]-\nonumber\\
            &\ \frac{1}{2}\mathbb{E}_{\mu}\big[\mathbb{E}_{W}^{\infty}[(\ell(W,Z)-L_\mu(W))^2]\big]+\frac{1}{2}\mathbb{E}_{\mu}\big[\mathbb{E}_{W}^{\infty}[(\ell(W,Z)-L_\mu(W))]^2\big]\bigg\}\nonumber\\
            =&\ \frac{\gamma^2}{2}\mathbb{E}_{\mu}\bigg[\mathbb{E}_{W}^{\infty}[(\ell(W,Z)-L_\mu(W))^2]-\mathbb{E}_{W}^{\infty}[(\ell(W,Z)-L_\mu(W))]^2\bigg]
        \end{align}
    where the final step uses the fact that $P_{W}^{[n]}\to P_{W}^{\infty}$. Comparing \eqref{equ:Thm5 nL} with \eqref{equ: n2Iskl} yields
    \begin{equation}
        \lim_{n\to\infty}n\cdot L(W;S)=\frac{1}{2}\lim_{n\to\infty}n\cdot I_{\mathrm{SKL} }(W;S).
    \end{equation}
    Eventually using $I_{\mathrm{SKL}}(W;S)=I(W;S)+L(W;S)$, we get
    \begin{equation}
        \lim_{n\to\infty}n\cdot I(W;S)=\lim_{n\to\infty}n\cdot L(W;S)=\frac{1}{2}\lim_{n\to\infty}n\cdot I_{\mathrm{SKL} }(W;S).
    \end{equation}
\end{proof}
\subsection{Proof of Theorem 5}
\begin{lemma}[Theorem 2 \cite{aminian2021exact}]\label{:Lem6}
    Suppose the non-negative loss function $\ell(w,Z)$ is $\sigma$-sub-Gaussian on the left-tail under distribution $\mu$ for all $w \in \mathcal{W}$. If we further assume $C_E\leq \frac{L(W;S)}{I(W;S)}$ for some $C_E\geq 0$, then for the $(\gamma,\pi(w),L_E(w,s))$-Gibbs algorithm, we have
    \begin{equation}
        0\leq {\rm gen}(P_{W|S}^\gamma,P_S)\leq \frac{2\sigma^2\gamma}{(1+C_E)n}
    \end{equation}
\end{lemma}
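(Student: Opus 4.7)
The plan is to combine the exact symmetrized-KL characterization of the Gibbs generalization error with a one-sided mutual-information bound valid under left-tail sub-Gaussianity, and then close the loop via the hypothesis $L(W;S)\geq C_E\, I(W;S)$. The non-negativity $\mathrm{gen}(P_{W|S}^\gamma, P_S)\geq 0$ is immediate from $\mathrm{gen}=I_{\mathrm{SKL}}(W;S)/\gamma$ and $I_{\mathrm{SKL}}\geq 0$, so only the upper bound needs work. For this upper bound, the first step is to establish a one-sided Xu--Raginsky inequality: since $\ell(w,Z)$ is left-tail $\sigma$-sub-Gaussian for every fixed $w$, by independence of the samples the centered empirical mean satisfies $\mathbb{E}_{P_S}[\exp(-\lambda(L_e(w,S)-L_\mu(w)))]\leq\exp(\lambda^2\sigma^2/(2n))$ for all $\lambda\geq 0$. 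Applying the Donsker--Varadhan variational formula to the change of measure from $P_W\otimes P_S$ to $P_{W,S}$ with test function $\lambda(L_\mu(W)-L_e(W,S))$ then yields
\[
\lambda\cdot \mathrm{gen}(P_{W|S}^\gamma, P_S)\leq \frac{\lambda^2\sigma^2}{2n}+I(W;S)
\]
for every $\lambda\geq 0$; optimizing over $\lambda$ and using $\mathrm{gen}\geq 0$ produces the one-sided bound $\mathrm{gen}\leq\sqrt{2\sigma^2 I(W;S)/n}$.

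The second step converts this into a self-referential inequality on $\mathrm{gen}$. From $C_E\leq L(W;S)/I(W;S)$ one obtains $I_{\mathrm{SKL}}(W;S)=I(W;S)+L(W;S)\geq(1+C_E)I(W;S)$, and hence
\[
I(W;S)\leq \frac{I_{\mathrm{SKL}}(W;S)}{1+C_E}=\frac{\gamma\cdot\mathrm{gen}(P_{W|S}^\gamma,P_S)}{1+C_E}.
\]
Substituting into the one-sided MI bound and squaring gives $\mathrm{gen}^2\leq 2\sigma^2\gamma\cdot\mathrm{gen}/(n(1+C_E))$; dividing through by $\mathrm{gen}$ (the case $\mathrm{gen}=0$ being trivial) produces the advertised bound $2\sigma^2\gamma/(n(1+C_E))$.

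The main obstacle I anticipate lies in the first step: the classical Xu--Raginsky bound is usually stated under two-sided sub-Gaussianity, while here only left-tail control of $\ell(w,Z)-L_\mu(w)$ is available. The key point is to restrict the Donsker--Varadhan optimization to $\lambda\geq 0$ and to verify that this one-sided MGF control is exactly what delivers the inequality in the correct direction, namely an upper bound on $\mathbb{E}_{P_W\otimes P_S}[L_e]-\mathbb{E}_{P_{W,S}}[L_e]$ rather than on its negative. Once the signs are pinned down, the remainder of the argument is essentially algebraic.
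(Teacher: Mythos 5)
Your proof is correct: the one-sided Donsker--Varadhan argument under left-tail sub-Gaussianity gives $\mathrm{gen}\leq\sqrt{2\sigma^2 I(W;S)/n}$, and closing the loop via $I(W;S)\leq I_{\mathrm{SKL}}(W;S)/(1+C_E)=\gamma\,\mathrm{gen}/(1+C_E)$ yields exactly the stated bound. The paper imports this lemma from the cited reference without reproving it, and your reconstruction follows the same route as that original proof, with the sign conventions for the left-tail MGF handled correctly.
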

\begin{proof}
    Using Theorem \ref{:Thm5}, we get
    \begin{equation}
        \lim_{n\to\infty}\frac{L(W;S)}{I(W;S)}=1
    \end{equation}
    Therefore, for $\frac{\delta}{2}>0$, there exists an $N\in\mathbb{N}^+$ such that for all $n>N$,
    \begin{equation}
        \frac{L(W;S)}{I(W;S)}>1-\frac{\delta}{2}
    \end{equation}
    Using the fact that a bounded random variable $\ell\in[a,b]$ is $(b-a)/2$-sub-Gaussian, we apply Lemma \ref{:Lem6} and gets the result.
\end{proof}
\section{Mean Estimation Example}\label{appendixC}
\subsection{Preparation: SKL-divergence between Gaussians}
Suppose $X,Y$ follows a multi-Gaussian distribution: $[X,Y]\sim \mathcal{N}(m,\Sigma_{[X,Y]})$, where $\Sigma_{[X,Y]}$ can be written as:
\begin{equation}
\Sigma_{[X,Y]} = \left(
\begin{matrix}
\Sigma_X & \Sigma_{XY} \\
\Sigma_{YX} & \Sigma_Y
\end{matrix}
\right)
\end{equation}
The independent $\Tilde{X},\Tilde{Y}\sim P_XP_Y$ also follows a multi-gaussian distribution, that is $\Tilde{X},\Tilde{Y}\sim \mathcal{N}(m,\Sigma_{[\Tilde{X},\Tilde{Y}]})$, and:
\begin{equation}
\Sigma_{[\Tilde{X},\Tilde{Y}]} = \left(
\begin{matrix}
\Sigma_X & 0 \\
0 & \Sigma_Y
\end{matrix}
\right)
\end{equation}
then:
\begin{align}\label{equ:IsklXY}
I_{\mathrm{SKL} }(X;Y)&=\frac{1}{2}\log\frac{|\Sigma_X||\Sigma_Y|}{|\Sigma_{[X,Y]}|}+\frac{1}{2}\left(\log\frac{|\Sigma_{[X,Y]}|}{|\Sigma_X||\Sigma_Y|}+{\rm{tr}}(\Sigma_{[X,Y]}^{-1}\Sigma_{[\Tilde{X},\Tilde{Y}]}-I)\right)\\
&=\frac{1}{2}{\rm{tr}}(\Sigma_{[X,Y]}^{-1}\Sigma_{[\Tilde{X},\Tilde{Y}]}-I)
\end{align}\\
\subsection{Estimating the mean}
$S=\{Z_i\}_{i=1}^n$ is the training set, where $Z_i$ is a $d$ dimentional vector sampled i.i.d. from $\mathcal{N}(0_d,(\frac{1}{\sqrt{2\beta}})^2I_d)$. We consider the problem of learning the means of the dataset. For simplisity, we consider $d=1$. Loss function is the square error $\ell(w,Z)\triangleq \Vert w-Z\Vert_2$. We further choose our prior distribution $\pi(w)=\frac{1}{\sqrt{\pi}}\exp(-w^2)$

Under such settings, it can be calculated that
\begin{equation}
    P_{W,S}\sim \mathcal{N}(\{0\}^{n+1},\Sigma)
\end{equation}
where $\Sigma$ is an $n+1$ dimensional matrix as follow
\begin{equation}
    \Sigma=\frac{1}{2}
    \left(
    \begin{matrix}
    \frac{n\gamma\beta+n\beta+\gamma^2}{n(1+\gamma)^2\beta} & \frac{\gamma}{n\beta(1+\gamma)} & \frac{\gamma}{n\beta(1+\gamma)} &\frac{\gamma}{n\beta(1+\gamma)} & \cdots \\
    \frac{\gamma}{n\beta(1+\gamma)} & \frac{1}{\beta} & 0 & 0 & \cdots\\
    \frac{\gamma}{n\beta(1+\gamma)} & 0 & \frac{1}{\beta} & 0 & \cdots\\
    \frac{\gamma}{n\beta(1+\gamma)} & 0 & 0 & \frac{1}{\beta} & \cdots\\
    \vdots & \vdots & \vdots & \vdots & \ddots
    \end{matrix}
    \right)
\end{equation}
\begin{equation}
    \Sigma^{-1}=2\left(
    \begin{matrix}
    \gamma+1 & -\frac{\gamma}{n} & -\frac{\gamma}{n} &-\frac{\gamma}{n} & \cdots \\
    -\frac{\gamma}{n} & \frac{\gamma^2}{n^2(\gamma+1)}+\beta & \frac{\gamma^2}{n^2(\gamma+1)} & \frac{\gamma^2}{n^2(\gamma+1)} & \cdots\\
    -\frac{\gamma}{n} & \frac{\gamma^2}{n^2(\gamma+1)} & \frac{\gamma^2}{n^2(\gamma+1)}+\beta & \frac{\gamma^2}{n^2(\gamma+1)} & \cdots\\
    -\frac{\gamma}{n} & \frac{\gamma^2}{n^2(\gamma+1)} & \frac{\gamma^2}{n^2(\gamma+1)} & \frac{\gamma^2}{n^2(\gamma+1)}+\beta & \cdots\\
    \vdots & \vdots & \vdots & \vdots & \ddots
    \end{matrix}
    \right)
\end{equation}
With the joint distribution, we are able to calculate the symmetrized KL information of $S$ and $W$ using \eqref{equ:IsklXY}, which is the characterization of generalization error
    \begin{align}\label{equ:res_IsklWS_2}
    \gamma{\rm{gen}}(\mu,P_{W|S})&=I_{\mathrm{SKL} }(W;S)\nonumber\\
    &=\frac{1}{2}{\rm{tr}}(\Sigma_{[W,S]}^{-1}\Sigma_{[\Tilde{W},\Tilde{S}]}-I)\nonumber\\
    &=\frac{\gamma^2}{n\beta(1+\gamma)}
    \end{align}
Next, we consider the symmetrized KL information of $Z_i$ and $W$. Similarly, we get
    \begin{align}\label{equ:res_IsklWZ_2}
    I_{\mathrm{SKL} }(W;Z_i)&=\frac{1}{2}{\rm{tr}}(\Sigma_{[W,Z_i]}^{-1}\Sigma_{[\Tilde{W},\Tilde{Z_i}]}-I)\nonumber\\
    &=\frac{1}{2}{\rm{tr}}\left(
    \left(
    \begin{matrix}
    \frac{n\gamma\beta+n\beta+\gamma^2}{n(1+\gamma)^2\beta} & \frac{\gamma}{n\beta(1+\gamma)}\\
    \frac{\gamma}{n\beta(1+\gamma)} & \frac{1}{\beta}
    \end{matrix}
    \right)^{-1}\cdot
    \left(
    \begin{matrix}
    \frac{n\gamma\beta+n\beta+\gamma^2}{n(1+\gamma)^2\beta} & 0\\
    0 & \frac{1}{\beta}
    \end{matrix}
    \right)-I
    \right)\nonumber\\
    &=\frac{\gamma^2}{n^2\beta(1+\gamma)+\gamma^2(n-1)}
    \end{align}
From the result of \eqref{equ:res_IsklWS_2} and \eqref{equ:res_IsklWZ_2}, we get
\begin{equation}\label{equ:gap_example_2}
    \begin{aligned}
    \sum_{i=1}^nI_{\mathrm{SKL} }(W;Z_i)-I_{\mathrm{SKL} }(W;S) = \Theta\bigg(\frac{1}{n^2}\bigg)=o(I_{\mathrm{SKL} }(W;S))
    \end{aligned}
\end{equation}
We can see that \eqref{equ:res_IsklWZ_2} and \eqref{equ:gap_example_2} corresponds to Theorem \ref{:Thm2} and Theorem \ref{:Thm4}, despite we are not considering a bounded loss function.

We further investigate the Jensen gap term $J_i(w,z_i)$, since as stated previously, the key to prove Theorem \ref{:Thm4} is that the effect of variable $z_i$ is order-wise smaller than that of $w$. In estimating the mean problem, we can calculate that
    \begin{align}
    J_i^{[n]}(w,z_i)
    =&\log\left(\sqrt\frac{n^2(1+\gamma)^2\beta}{\pi (n^2\beta(1+\gamma)+(n-1)\gamma^2)}\right)-\log\left(\sqrt{\frac{1+\gamma}{\pi}}\right)+\frac{n-1}{n^2}\cdot \frac{\gamma^2}{1+\gamma}\cdot\frac{1}{2\beta}\nonumber\\
    &+w^2(1+\gamma)\left[1-\frac{n^2(1+\gamma)^2\beta}{n^2\beta(1+\gamma)+(n-1)\gamma^2}\right]-\frac{2\gamma}{n}wz_i\left[1-\frac{n^2(1+\gamma)^2\beta}{n^2\beta(1+\gamma)+(n-1)\gamma^2}\right]\nonumber\\
    &+\frac{\gamma^2}{n^2(1+\gamma)}z_i^2\left[1-\frac{n^2(1+\gamma)^2\beta}{n^2\beta(1+\gamma)+(n-1)\gamma^2}\right]\\
    =& w^2\Theta\bigg(\frac{1}{n}\bigg)+wz_1\Theta\bigg(\frac{1}{n^2}\bigg)+z_1^2\Theta\bigg(\frac{1}{n^3}\bigg)+\Theta\bigg(\frac{1}{n^2}\bigg).
    \end{align}
We can see that the contribution of $z_i$ term is $\Theta\big(\frac{1}{n^2}\big)$ which is indeed 1 order smaller than those terms not containing $z_i$.

Finally, we provide similar analysis on mutual information.
    \begin{align}
        I(W;Z_i)&=\frac{1}{2}\log\bigg(1+\frac{\gamma^2}{n^2(1+\gamma)\beta+(n-1)\gamma^2}\bigg)\\
        &=\Theta\bigg(\frac{1}{n^2}\bigg),
    \end{align}
and
    \begin{align}
        I(W;S)&=\frac{1}{2}\log\bigg(1+\frac{\gamma^2}{n\beta(1+\gamma)}\bigg)\\
        &\sim \frac{1}{2}\cdot\frac{\gamma^2}{n\beta(1+\gamma)}\\
        &=\frac{1}{2}I_{\mathrm{SKL} }(W;S),
    \end{align}
which corresponds to our result in Theorem \ref{:Thm5}.

\end{document}